\renewcommand\and{
  \end{tabular}%
  \hfill
  \begin{tabular}[t]{>{\centering\arraybackslash}p{.45\textwidth}}}
\newtheorem{theorem}{Theorem}
\newtheorem{remark}{Remark}
\newtheorem{definition}{Definition}
\newtheorem{lemma}{Lemma}
\newtheorem{assumption}{Assumption}
\def\const{\text{const}}
\setlist{noitemsep}
\begin{document}

\title{Properties and baroclinic instability of stratified thermal upper-ocean flow}

\author{F.J.\ Beron-Vera\\ Department of Atmospheric Sciences\\ Rosenstiel School of Marine, Atmospheric \& Earth Science\\ University of Miami\\ Miami, Florida, USA\\ \href{mailto:fberon@miami.edu}{\texttt{fberon@miami.edu}} 
\and
M.J.\ Olascoaga\\ Department of Ocean Sciences\\ Rosenstiel School of Marine, Atmospheric \& Earth
Science\\ University of Miami\\ Miami, Florida, USA\\ \href{mailto:jolascoaga@miami.edu}{\texttt{jolascoaga@miami.edu}}} 
\date{Started: October 27, 2023. This version: \today.\\ To appear in \emph{Revista Mexicana de F\'isica}.\vspace{-0.25in}}
\maketitle

\noindent \emph{On the occasion of the 50th anniversary of Departamento de Oceanograf\'ia F\'isica of CICESE (Ensenada, Baja California, Mexico), we dedicate this paper to the memory of Pedro Ripa (1946--2001), who inspired us to follow him in listening to the whispers in the woods.}

\tableofcontents

\begin{abstract}
    We study the properties of, and investigate the stability of a baroclinic zonal current in, a thermal rotating shallow-water model, sometimes called \emph{Ripa's model}, featuring stratification for quasigeostrophic upper-ocean dynamics.  The model has Lie--Poisson Hamiltonian structure.  In addition to Casimirs, the model supports weak Casimirs forming the kernel of the Lie–Poisson bracket for the potential vorticity evolution independent of the details of the buoyancy as this is advected under the flow. The model sustains Rossby waves and a neutral model, whose spurious growth is prevented by a positive-definite integral, quadratic on the deviation from the motionless state.  A baroclinic zonal jet with vertical curvature is found to be spectrally stable for specific configurations of the gradients of layer thickness, vertically averaged buoyancy, and buoyancy frequency.  Only a subset of such states was found Lyapunov stable using the available integrals, except the weak Casimirs, whose role in constraining stratified thermal flow remains to be understood.  The existence of Lyapunov-stable states enabled us to \emph{a priori} bound the nonlinear growth of perturbations to spectrally unstable states. Our results do not support the generality of earlier numerical evidence on the suppression of submesoscale wave activity as a result of the inclusion of stratification in thermal shallow-water theory, which we supported with direct numerical simulations.\\
    
    \noindent \textbf{Keywords.} Ocean mixed layer, submesoscales, thermal shallow water/Ripa's model, stratification, noncanonical Hamiltonian structure, Arnold's method, Sheperd's method.   
\end{abstract}

\section{Introduction}

The uppermost layer of the ocean, encompassing the region above the main thermocline, including the mixed layer, is marked by the widespread presence of highly energetic mesoscale eddies (20–300 km) with lifespans ranging from weeks to months. These eddies exert a dominant influence on horizontal transport and significantly contribute to vertical transport, thereby impacting the overall biogeochemical cycle \cite[e.g.,][and references therein]{McGillicuddy-16}.

Recent advancements in high-resolution numerical simulations and field observations have shed light on the role of submesoscale motions (1–10 km) in vertical transport dynamics. These motions manifest as Kelvin–Helmholtz-like vortices that form along density fronts and play a significant role in shaping vertical transport processes \cite[e.g.,][and references therein]{Mahadevan-16}.

Mesoscale motions are well described by quasigeostrophic (QG) theory, valid, roughly speaking, for length scales of the order of the gravest baroclinic (i.e., internal) Rossby radius of deformation and time scales much longer than the inverse of the Coriolis parameter \cite{Pedlosky-87}.  The simplest QG model, capable to provide a minimal description of mesoscale upper-ocean flow, has a layer of constant density, limited from above by a rigid lid and from below by a soft interface with an infinitely deep layer of quiescent fluid. Using notation introduced by \cite{Ripa-JFM-95}, we will refer to this model as the \emph{HLQG} for including a \emph{h}omogeneous (density) \emph{l}ayer.  Submesoscale motions, with smaller length scales and faster time scales than mesoscale motions, are, in principle, beyond the scope of QG theory \cite{McWilliams-16}.  However, based on numerical simulations, an interesting observation has been recently made by \cite{Holm-etal-21}, who noted that submesoscale motions can indeed be represented by a QG model with one layer when lateral inhomogeneity is allowed in the buoyancy (temperature) field.  Such simulations revealed sub-deformation scale (i.e., submesoscale) Kelvin--Helmholtz-like vortex rolls reminiscent of structures commonly observed in surface ocean color pictures acquired by satellites.  A result of a simulation of that type along with an outstanding cloud-free ocean color image revealing submesoscale vortices are shown in Fig.\ \ref{fig:color}. 

The above kind of QG model, referred to as \emph{thermal QG model} by \cite{Warneford-Dellar-13}, has a long history.  It was derived by \cite{Ripa-RMF-96} asymptotically from the thermal primitive equations (PE) upon expanding them in small Rossby number.  The thermal PE, or rotating shallow-water \cite{Warneford-Dellar-13}, model was introduced in the 1960s \cite{Obrien-Reid-67}, widely employed from the early 1980s through the early 2000s \cite[e.g.,][]{Schopf-Cane-83, McCreary-Kundu-88, Fukamachi-etal-95, Beier-97, Ochoa-etal-98, Pinet-Pavia-00}, generalized to a system of multiple layers by \cite{Ripa-GAFD-93}, and extensively investigated theoretically by the same author \cite{Ripa-JGR-96, Ripa-JFM-95, Ripa-DAO-99}. Due this author's several contributions, the thermal QG and PE models have been referred to as \emph{Ripa's models} \cite{Dellar-03}.  Honoring notation introduced by \cite{Ripa-JFM-95}, we will refer to these models as models of the \emph{IL$^0$} class, where ``IL'' emphasizes that the models have one (or possibly many) \emph{i}nhomogeneous \emph{l}ayer(s) and the superscript indicates that their fields do not vary in the vertical.  The left panel of Figure \ref{fig:models} shows a cartoon of the vertical structure in the IL$^0$ model class, assuming a reduced-gravity setting.

\begin{figure}
   \centering
   \includegraphics[width=.7\textwidth]{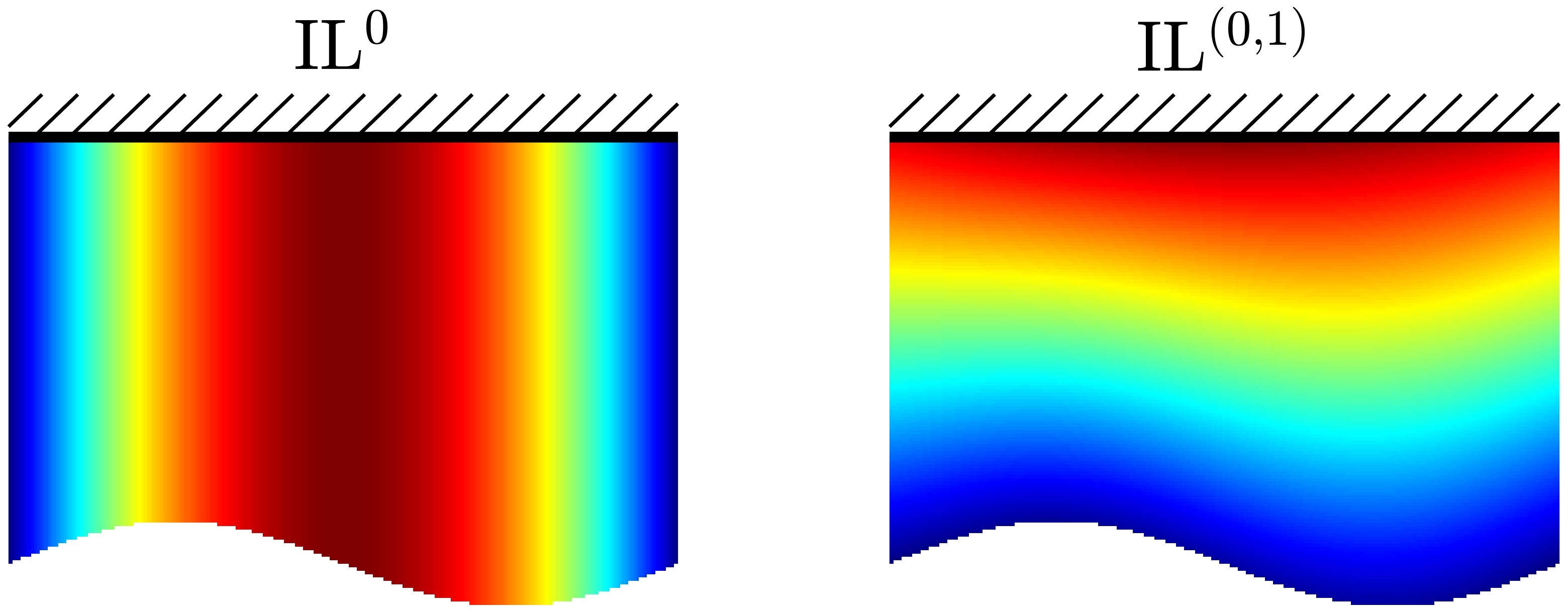}
   \caption{Cartoons depicting the density structure on a vertical plane in the thermal shallow-water model, known as Ripa's model or IL$^0$ (left), and its stratified version analyzed in this paper, denoted as IL$\smash{^{(0,1)}}$ (right). A reduced-gravity setting is assumed, featuring one active layer bounded from above by a rigid lid and from below by a soft interface, which interfaces with an inert layer.}
   \label{fig:models}
\end{figure}

In the last decade, the IL$^0$ has made a strong comeback \cite{Warneford-Dellar-14, Gouzien-etal-17, Eldred-etal-19, Lahaye-etal-20, Kurganov-etal-20, Holm-etal-22, Andrade-etal-22, Cao-etal-23, Crisan-etal-23, Lahaye-etal-24}, motivated in large part by the tendency of lateral temperature gradients in the upper ocean to become steeper as the ocean absorbs heat from a warming troposphere.  In particular, \cite{Beron-21-POFb} introduced a family of IL-type models, both in PE and for QG dynamics, featuring stratification.  The buoyancy field in this model family, referred to as IL$^{(0,n)}$, is written as a polynomial in the vertical coordinate of arbitrary whole degree ($n$) with coefficients that vary laterally and with time while they are advected (by Lie transport) under the flow of the model horizontal velocity, which is vertically uniform, assigning meaning to the first slot in the superscript in IL$^{(0,n)}$.  The IL$^{(0,n)}$PE models admit Euler--Poincar\'e variational formulation \cite{Holm-etal-02} and, via a partial Legendre transform, possess a Lie--Poisson, i.e., noncanonical, Hamiltonian structure \cite{Morrison-Greene-80}.  The IL$^{(0,n)}$QG, derived asymptotically from the IL$^{(0,n)}$PE, are Hamiltonian in a Lie--Poisson sense. Including stratification in the IL models expands their scope, for instance, by enabling them to represent important processes such as the restratification of the mixed layer, which follows the development of submesoscale motions by baroclinic instability \cite{Boccaletti-etal-07}.  (In an attempt to add more physics to the IL models, velocity vertical shear is included in \cite{Ripa-JFM-95} in the single layer case and in \cite{Beron-21-RMF} in a system with multiple layers. However, the resulting IL models with stratification and vertical shear do not enjoy the aforementioned geometric mechanics properties.)

\subsection{Goals of the paper}

In this paper we investigate in some depth the properties of the recently proposed stratified IL models by focusing on the IL$^{(0,1)}$QG (Figure \ref{fig:models}, right panel). We also study how the model fares with respect to baroclinic instability, which, as noted, is an important source of submesoscale circulatory motions in the ocean mixed layer. 

It may sound awkward to attempt an investigation of baroclinic instability using an IL model involving a single layer since the (horizontal) velocity is vertically uniform. Two layers may seem to be required as is needed to represent baroclinic instability using the HLQG in the so-called Phillips problem \cite{Pedlosky-87}. However, by the thermal-wind balance, which dominates at low frequency, in (every layer of) an IL model the velocity field \emph{implicitly} includes shear in the vertical.  Furthermore, implicitly, the velocity vertical profile in the IL$^{(0,1)}$ is curved.  This extends the standard baroclinic instability problem of Phillips, or Eady in the continuously stratified case \cite{Pedlosky-87}, from uniform vertical shear (linear vertical profile) to velocity with vertical curvature, which augments the dimension of the space of parameters for exploration.

While the study of this paper is of interest \emph{per se}, a motivation for pursuing it is found in numerical simulations presented by \cite{Beron-21-POFb}. These simulations have suggested that inclusion of stratification in the IL models tends to halt the development of submesoscale circulations.  \change[RA]{This followed from the comparison of evolutions in the IL$^0$QG and IL$^{(0,1)}$QG forced by bottom topography, which can be included without spoiling the Hamiltonian structure of the models. (This was shown in previous work \mbox{\cite{Beron-21-POFa, Holm-etal-21}} to hold for the IL$^0$QG and here we show that it also applies to the IL$^{(0,1)}$QG.) }{This followed from the comparison of forced evolutions in the IL$^0$QG and IL$^{(0,1)}$QG. The forcing, which can be interpreted as a topographic forcing \mbox{\cite{Holm-etal-21}}, does not spoil the Hamiltonian structure of the models \mbox{\cite{Beron-21-POFa, Holm-etal-21}}.} The study carried out here will help assess the generality of the numerical evidence presented in \cite{Beron-21-POFb}.

\begin{figure}
    \centering
    \includegraphics[width=.35\textwidth]{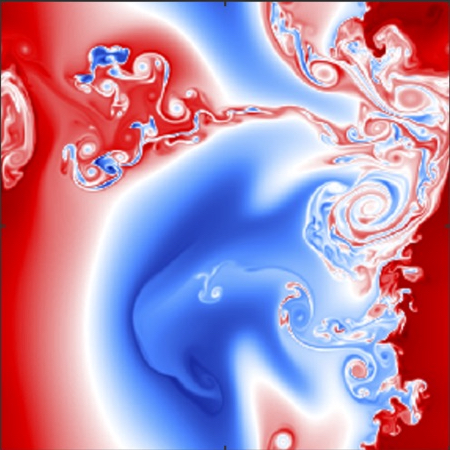}\,
    \includegraphics[width=.35\textwidth]{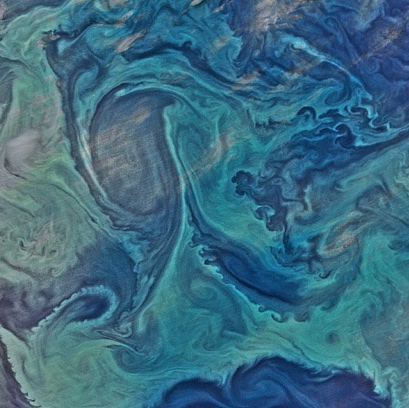}
    \caption{(left) Emergent cascade of submesoscale vorticity filament rollups in a reduced-gravity direct numerical simulation of the IL$^0$QG with ``topographic'' forcing in a doubly periodic domain $\mathbb R/R\mathbb Z \times \mathbb R/R\mathbb Z$ of the $\beta$-plane, where $R \approx 25$ km is the baroclinic Rossby radius of deformation. (right) Ocean color image acquired by VIIRS (Visible Infrared Imaging Radiometer Suite) on 1 January 2015 west of the Drake Passage in the Southern Ocean, revealing vortices with diameters ranging from a couple of km to a couple of hundred km.  Image credit: NASA Ocean Color Web (\href{https://oceancolor.gsfc.nasa.gov/gallery/447/}{https://oceancolor.gsfc.nasa.gov/gallery/447/}).}
    \label{fig:color}
\end{figure}

\subsection{Paper organization}

The rest of the paper is organized as follows.  In Section \ref{sec:model} we review the IL$^{(0,1)}$QG and describe its properties.  Specifically, Section \ref{sec:eqs} presents the model equations along with its boundary conditions and a physical interpretation of its variables. Section \ref{sec:uniqueness} discusses uniqueness of solutions in the model.  Section \ref{sec:circulations} is dedicated to discussing theorems of circulation, of an appropriately defined Kelvin--Noether-like quantity along material loops (Section \ref{sec:Kelvin}) and of the velocity along solid boundaries of the flow domain (Section \ref{sec:coasts}).  Invariant subspaces of the model are considered in Section \ref{sec:invariant}. In Section \ref{sec:hamiltonian} the Lie--Poisson Hamiltonian structure is clarified, particularly in relation with boundary conditions, which were only superficially discussed in \cite{Beron-21-POFb}. Inclusion of topographic forcing without destroying Hamiltonian structure is treated in Section \ref{sec:topo}. Reviewing the Hamiltonian structure of the IL$^{(0,1)}$QG is helpful for Section \ref{sec:integrals}, which is devoted to describing the conservation laws of the system, which, when linked to explicit symmetries, are obtained via a Noether's theorem appropriate for noncanonical Hamiltonian systems.  The conservation laws include a family of integrals of motion that form the kernel of the Lie–Poisson bracket for the potential vorticity evolution independent of buoyancy details as the buoyancy is transported by the flow.  Linear waves are discussed in Section \ref{sec:waves}.  Section \ref{sec:stability} is dedicated to investigate the stability of a baroclinic zonal flow on the $\beta$-plane with vertical curvature.  Three types of stability notions are discussed, namely, spectral (Section \ref{sec:spectral}), formal (Section \ref{sec:formal}), and Lyapunov (Section \ref{sec:lyapunov}).  In Section \ref{sec:bounds} we derive a-priori bounds on the nonlinear growth of perturbations to unstable basic states.  A discussion, accompanied by direct numerical simulations, of differences between behavior in the IL$^{(0,1)}$QG and IL$^0$QG is offered in Section \ref{sec:discussion}.  Section \ref{sec:conclusions} presents the conclusions from the paper.  \add[RA]{A glossary containing relevant notions from geophysical fluid dynamics is included following this section.} Finally, an Appendix is included containing numerical details of the direct simulations.

\section{The IL$^{(0,1)}$QG}\label{sec:model}

Let $\mathbf x = (x,y)$ denote position on a periodic zonal domain $\mathscr D = \mathbb R/L\mathbb Z \times [0,W]$ of the $\beta$ plane.  That is, the channel rotates steadily at angular velocity $\tfrac{1}{2}f\hat{\mathbf z}$, where $f = f_0 + \beta y$ is the Coriolis parameter.  We will denote by $\partial \mathscr D_-$ (resp., $\partial \mathscr D_+$) the southern (resp., northern) wall of the channel at $y = 0$ (resp., $y = W$). Consider a layer of (ideal) fluid limited above by a horizontally rigid lid at $z = 0$ and below by a soft interface at $z = -h(\mathbf x,t)$ with an infinitely deep layer of homogeneous fluid at rest. As in (rotating) shallow-water theory, the motion in the active (top) layer is assumed to be columnar. The vertically shearless horizontal velocity is denoted by $\bar{\mathbf u}(\mathbf x,t)$, where the overbar represents a vertical average. Similar to conventional \emph{thermal} shallow-water theory, the buoyancy of the active layer is permitted to change both horizontally and temporally \cite{Ripa-GAFD-93}. However, unlike this theory, vertical variations in buoyancy are also permitted \cite{Beron-21-POFb}.  In the IL$^{(0,1)}$ class of \emph{stratified} thermal shallow-water models, of interest here, this is done by writing the buoyancy as 
\begin{equation}
    \vartheta(\mathbf x,z,t) = \bar{\vartheta}(\mathbf x,t) + \left(1 + \frac{2z}{h(\mathbf x,t)}\right) \vartheta_\sigma(\mathbf x,t).
    \label{eq:thvarphi}
\end{equation}
That is, the stratification in the active layer is set to be be uniform, with the buoyancy (linearly) varying from $\bar\vartheta + \vartheta_\sigma$ at $z = 0$ to $\bar\vartheta - \vartheta_\sigma$ at $z = -h$.  This imposes the condition
\begin{equation}
    \bar\vartheta \ge \vartheta_\sigma > 0,
    \label{eq:vertstab}
\end{equation}
stating that the buoyancy is everywhere positive and the density increases with depth.  For future reference, we introduce the instantaneous buoyancy (or Brunt--V\"ais\"al\"a) frequency squared
\begin{equation}
    n^2(\mathbf x,z,t) := \partial_z\vartheta(\mathbf x,z,t) = 2\frac{\vartheta_\sigma(\mathbf x,t)}{h(\mathbf x,t)}.
\end{equation}

\begin{remark}[On notation]
  The subscript notation is motivated by denoting by $\sigma$, as in \textup{\cite{Ripa-JFM-95}}, the scaled vertical coordinate $1 + \smash{\frac{2z}{h(\mathbf x,t)}}$.  For instance, in the IL$^{(0,2)}$ model class \textup{\cite{Beron-21-POFb}} the buoyancy is written as $\vartheta = \bar\vartheta + \sigma\vartheta_\sigma + (\sigma^2-\frac{1}{3})\vartheta_{\sigma^2}$, clarifying additional notation introduced below.
\end{remark}
 
\subsection{Model equations}\label{sec:eqs}

The evolution equations of the IL$^{(0,1)}$QG in the reduced-gravity setting above are given by \cite{Beron-21-POFb}
\begin{subequations}
\begin{gather}
    \partial_t\bar\xi + \smash{\big\{\bar\psi,\bar\xi - R_S^{-2} (\psi_\sigma-\tfrac{2}{3}\psi_{\sigma^2})\big\}_{xy}} = 0,\\
    \partial_t\psi_{\sigma} + \{\bar\psi,\psi_{\sigma}\}_{xy} = 0,\\
    \partial_t\psi_{\sigma^2} + \{\bar\psi,\psi_{\sigma^2}\}_{xy} = 0,
\end{gather}
where
\begin{align}
    \nabla^2\bar\psi - R_S^{-2}\bar\psi =  \bar\xi - R_S^{-2}(\psi_\sigma-\tfrac{2}{3}\psi_{\sigma^2}) - \beta y
    \label{eq:inv}
\end{align}
with
\begin{equation}
    R_S^2 := \left(1 - \tfrac{1}{3}S\right) R^2.    
\end{equation}  
Here
\begin{equation}
   \{a,b\}_{xy} := \nabla^\perp a\cdot \nabla b
   \label{eq:canonical}
\end{equation}
for any differentiable functions $a,b(\mathbf x,t)$, where $\nabla^\perp$ is short for $\hat{\mathbf z}\times\nabla$, is the Jacobian of the transformation $(x,y)\mapsto (a,b)$ and 
\begin{equation}
    R := \frac{\sqrt{g_\mathrm{r}H_\mathrm{r}}}{|f_0|},\quad 
    S := \frac{N_\mathrm{r}^2H_\mathrm{r}}{2g_\mathrm{r}}.
    \label{eq:RS}
\end{equation}
\label{eq:IL01}%
\end{subequations}
The positive constants $H_\mathrm{r}$, $g_\mathrm{r}$, and $N_\mathrm{r}^2$ are reference (i.e., in the absence of currents) layer thickness, vertically averaged buoyancy,\footnote{In \cite{Beron-21-POFb}, $2g_\mathrm{r}$ is incorrectly referred to as the reference buoyancy jump at the bottom of the active layer.} and buoyancy frequency squared, respectively. The (positive) constant on the left in \eqref{eq:RS} corresponds to the equivalent barotropic (external) Rossby radius of deformation in a model with arbitrary stratification in a reduced-gravity setting.  The constant on the right in \eqref{eq:RS} is a measure of reference stratification.  The reference buoyancy varies from $1 + S$ at $z = 0$ to $1 - S$ at $z = -H_\mathrm{r}$. Thus, by static stability, $0 < S < 1$.

The system is subjected to
\begin{subequations}
\begin{equation}
    \nabla^\perp\bar\psi\cdot\hat{\mathbf n}\vert_{\partial \mathscr D_\pm} = 0 \Longleftrightarrow \partial_x\vert_{y=0,W}\bar\psi = 0
\end{equation}
and periodicity in the along channel direction, $x$, i.e.,
\begin{equation}
    \bar\xi(x+L,y,t) = \bar\xi(x,y,t)
\end{equation}
\label{eq:IL01-BCs}%
\end{subequations}
and similarly for the rest of the variables.

The IL$^{(0,1)}$QG has three prognostic fields, $(\bar\xi,\psi_\sigma, \psi_{\sigma^2})$, assumed to be sufficiently smooth in each of its arguments, $(\mathbf x,t)$. These diagnose $\bar\psi(\mathbf x,t)$ via \eqref{eq:inv}, which defines the invertibility principle for the IL$^{(0,1)}$QG.  (In Fourier space, say, the positive-definite symmetric operator $R_S^{-2} - \nabla^2$ is nondegenerate; hence, its operator inverse $(R_S^{-2} - \nabla^2)^{-1}$ exists and is well defined.) 

The physical meaning of the above variables is as follows. Let $\mathrm{Ro} > 0$ be a small parameter taken to represent a Rossby number, measuring the strength of inertial and Coriolis forces, e.g., 
\begin{equation}
  \mathrm{Ro} = \frac{V}{|f_0|R} \ll 1, 
\end{equation}
where $V$ is a characteristic velocity.  In QG theory \cite[e.g.,][]{Pedlosky-87} it is assumed that 
\begin{equation}
  (|\bar{\mathbf u}|,h-H_\mathrm{r},\partial_t,\beta y) = O(\mathrm{Ro}\hspace{0.08334em}
  V, \mathrm{Ro}\hspace{0.08334em} R, \mathrm{Ro} f_0,  \mathrm{Ro} f_0).
  \label{eq:QGscaling}
\end{equation}
Consistent with the QG scaling \eqref{eq:QGscaling}, with an $O(\mathrm{Ro}^2)$ error, one has:
\begin{gather}
	 \bar{\mathbf u} =  \nabla^\perp\bar\psi,\label{eq:u}\\ 
	 h = H_\mathrm{r} + \frac{H_\mathrm{r}}{f_0R_S^2}(\bar\psi - \psi_\sigma + \tfrac{2}{3}\psi_{\sigma^2})\label{eq:h}\\
  \bar\vartheta = g_\mathrm{r} + \frac{2g_\mathrm{r}}{f_0R^2}\psi_\sigma,\label{eq:vartheta}\\
	 \vartheta_\sigma = \tfrac{1}{2}N^2_\mathrm{r}H_\mathrm{r} + \frac{4g_\mathrm{r}}{f_0R^2}\psi_{\sigma^2},\label{eq:vartheta_sigma}
\end{gather}
where
\begin{equation}
    \frac{\psi_\sigma - 2\psi_{\sigma^2}}{f_0R^2} \ge 
    \frac{S - 1}{2},\quad 
    \frac{\psi_{\sigma^2}}{f_0R^2} > -\frac{S}{4},
\end{equation}
by \eqref{eq:vertstab}.  Finally, with an $O(\mathrm{Ro}^2)$ error,  
\begin{equation}
    \frac{\nabla^\perp\cdot\bar{\mathbf u} + f}{h} = \frac{f_0 + \bar\xi}{H_\mathrm{r}}.
    \label{eq:q}
\end{equation}
The left-hand-side of \eqref{eq:q} is proportional to the Ertel $\frac{z}{h}$-potential vorticity in the hydrostatic approximation and with the horizontal velocity replaced with $\bar{\mathbf u}$ \cite[cf.][for details]{Ripa-JFM-95}. Thus $\bar\xi$ can be identified with potential vorticity in the IL$^{(0,1)}$QG.

With the identifications \eqref{eq:u}--\eqref{eq:q}, equation (\ref{eq:IL01}a) controls the evolution of IL$^{(0,1)}$QG potential vorticity. This quantity is not materially conserved, i.e., advected (by Lie transport) under the flow of $\nabla^\perp\bar\psi$. Rather, the potential vorticity in the IL$^{(0,1)}$QG is created (or annihilated) by the misalignment between the gradients of layer thickness and buoyancy (cf.\ Section \ref{sec:Kelvin}, for additional discussion). This is consistent with Ertel's $\frac{z}{h}$-potential vorticity not being materially conserved.  Equations (\ref{eq:IL01}b) and (\ref{eq:IL01}c) control, respectively, the evolution of the vertical average and gradient of buoyancy.  These are both materially conserved.  Altogether (\ref{eq:IL01}b) and (\ref{eq:IL01}c) give material conservation of buoyancy in the IL$^{(0,1)}$QG.  Finally, the boundary condition (\ref{eq:IL01-BCs}a) means no flow through the solid walls of the zonal channel domain.

\begin{remark}[Implicit vertical shear]
    By the thermal-wind balance, the velocity has implicit vertical curvature, which motivates the streamfunction notations $\psi_\sigma$ and $\psi_{\sigma^2}$ \textup{\cite{Beron-21-POFb}}.  Indeed, the buoyancy distribution \eqref{eq:thvarphi} implicitly implies that the velocity is determined, with an $O(\mathrm{Ro}^2)$ error, by the streamfunction 
    \begin{equation}
       \psi = \bar\psi +  \left(1 + \frac{2z}{H_\mathrm{r}}\right) \psi_\sigma + \left(\left(1 + \frac{2z}{H_\mathrm{r}}\right)^2 - \frac{1}{3}\right) \psi_{\sigma^2}.
    \end{equation}
\end{remark}

We note that the IL$^0$QG can be recovered from system \eqref{eq:C-IL01} upon omitting $\psi_{\sigma^2}$ and setting $S = 0$, i.e., replacing $R_S$ with $R$. Note, for instance, that the buoyancy restriction in the IL$^0$QG is $\psi_\sigma > - \smash{\tfrac{1}{2}}f_0R^2$.  The potential vorticity in the IL$^0$QG, given by $\bar\xi = \nabla^2\bar\psi - R^{-2}(\bar\psi - \psi_\sigma) + \beta y$, is created (or annihilated) as in the IL$^{(0,1)}$QG due to lateral changes in buoyancy, which is only permitted to vary in the horizontal (and time).  This is unlike the HLQG, which follows from the IL$^0$QG after ignoring $\psi_\sigma$.  The potential vorticity in the HLQG, $\bar\xi = \nabla^2\bar\psi - R^{-2}\bar\psi + \beta y$, is materially conserved. The results discussed below for the IL$^{(0,1)}$QG can be translated to corresponding results for the IL$^0$QG or HLQG in the indicated limits.

\subsection{Uniqueness of solutions}\label{sec:uniqueness}

Assume that $\bar\psi_1(\mathbf x,t)$ and $\bar\psi_2(\mathbf x,t)$ are two solutions of \eqref{eq:inv}, on the zonal channel domain $\mathscr D$, each one satisfying boundary conditions \eqref{eq:IL01-BCs}. Let $\phi := \bar\psi_1 - \bar\psi_2$; then
\begin{equation}
    \nabla^2\phi - R_S^{-2}\phi = 0 
    \label{eq:psi}
\end{equation}
on $\mathscr D$ subjected to
\begin{equation}
    \nabla^\perp\phi\cdot\hat{\mathbf n}\vert_{\partial \mathscr D_\pm} = 0 \Longleftrightarrow \partial_x\vert_{y=0,W} \phi= 0,\quad \phi(x,0,t) = \phi(x,L,t).
    \label{eq:psi-BCS}
\end{equation}
Multiplying \eqref{eq:psi} by $\phi$ and integrating by parts with \eqref{eq:psi-BCS} in mind,
\begin{equation}
    \int_{\mathscr D} |\nabla\phi|^2 + R_S^{-2}\phi^2\,d^2x = 0,
\end{equation}
which holds if and only if $\phi = 0$, i.e., $\psi_1 = \psi_2$.

\subsection{Circulations}\label{sec:circulations}

\subsubsection{A Kelvin circulation theorem}\label{sec:Kelvin}

Let $\mathbf f(\mathbf x)$ be such that $\nabla^\perp\cdot\mathbf f = \beta y$.  Let in addition $\partial \mathscr D_t$ be the material loop enclosing a material region $\mathscr D_t$ at time $t>0$ after being advected from its initial position at time $t=0$ by the flow of $\nabla^\perp\bar\psi$. Consider 
\begin{equation}
    \mathcal K := \oint_{\partial\mathscr D_t}\big(\nabla^\perp\bar\psi + \textbf{f} - R_S^{-2}\nabla^{-2}\nabla^\perp(\bar\psi - \psi_\sigma + \tfrac{2}{3}\psi_{\sigma^2})\big) \cdot d\textbf{x} = \int_{\mathscr D_t}\bar\xi\,d^2x,
    \label{eq:K}
\end{equation}
where the last equality follows upon using Stokes' theorem.  The above is an appropriate definition of the \emph{Kelvin circulation for the IL$^{(0,1)}$QG} as it leads to a theorem analogous to the Kelvin--Noether theorem for the IL$^{(0,1)}$PE \cite{Beron-21-POFb}. It extends that one found in Section 3.1 of \cite{Holm-15b} for adiabatic (i.e., HL) QG dynamics to stratified thermal (i.e., IL$^{(0,1)}$) QG dynamics. By (\ref{eq:IL01}a) and after pulling back the integrand in the last equality of \eqref{eq:K} to $t=0$ and changing variables, it follows that
\begin{equation}
    \dot{\mathcal K} = R_S^{-2}\int_{\mathscr D_t}\big\{\bar\psi,\psi_\sigma-\tfrac{2}{3}\psi_{\sigma^2}\big\}_{xy}\,d^2x.
    \label{eq:dotK}
\end{equation}
As in the IL$^{(0,1)}$PE, the misalignment between the gradients of layer thickness and buoyancy creates (or destroys) Kelvin circulation. This is most cleanly seen by noting that
\begin{equation}
    \int_{\mathscr D_t}\nabla^\perp h \cdot \nabla\left(\bar\vartheta - \tfrac{1}{3}\vartheta_\sigma\right)\,d^2x = 2R_S^{-2}\int_{\mathscr D_t}\big\{\bar\psi,\psi_\sigma-\tfrac{2}{3}\psi_{\sigma^2}\big\}_{xy}\,d^2x   
\end{equation}
with an $O(\mathrm{Ro}^3)$ error.
\begin{remark}
Two exceptions are\change[RA]{as follows:}{the cases when $\partial \mathscr D_t$ is isopycnic at each level in the vertical and when this is solid, as is the case of the boundary of the zonal channel domain $\mathscr D$. In each of these situations, the Kelvin circulation is conserved. The former follows by noting that the right-hand side of \mbox{\eqref{eq:dotK}} equals $-\smash{\oint_{\partial \mathscr{D}_t}} \bar{\psi} \nabla^\perp (\psi_\sigma - \tfrac{2}{3} \psi_{\sigma^2}) \cdot \hat{\mathbf{n}} ds$, while the latter follows by realizing that it is also equal to $\smash{\oint_{\partial \mathscr{D}_t}} (\psi_\sigma - \tfrac{2}{3} \psi_{\sigma^2}) \nabla^\perp \bar{\psi} \cdot \hat{\mathbf{n}} ds$. Kelvin circulation conservation, when $\partial \mathscr{D}_t$ is solid, differs from that in the IL$^{(0,1)}$PE, where it must be levelwise isopycnic, making the IL$^{(0,1)}$QG less restrictive than the IL$^{(0,1)}$PE}

\remove[RA]{Note that $\smash{\{\bar\psi,\psi_\sigma-\tfrac{2}{3}\psi_{\sigma^2}\}_{xy}} = \smash{\nabla^\perp\bar\psi}\cdot \nabla(\psi_\sigma-\tfrac{2}{3}\psi_{\sigma^2}) = - \smash{\nabla\bar\psi\cdot \nabla^\perp}(\psi_\sigma-\tfrac{2}{3}\psi_{\sigma^2})$. Since $\nabla\cdot\nabla^\perp = 0$, then by Stokes' theorem the integral on the right-hand-side of \mbox{\eqref{eq:dotK}} is equal to $-\smash{\oint_{\partial \mathscr D_t}}\bar\psi\nabla^\perp(\psi_\sigma - \tfrac{2}{3}\psi_{\sigma^2})\cdot\hat{\mathbf n}ds$. This vanishes if $\psi_\sigma$ and $\psi_{\sigma^2}$ are constant along $\mathscr D_t$. The Kelvin circulation is conserved in these circumstances, which express that the material loop $\partial \mathscr D_t$ is isopycnic at each level in the vertical.}
    
\remove[RA]{When $\mathscr D_t$ is replaced by the zonal channel domain $\mathscr D$, then the Kelvin circulation is conserved, whether $\partial \mathscr D$ is levelwise isopycnic or not. To see this, first note that, in such a case, the integral on right-hand-side of \mbox{\eqref{eq:dotK}} is replaced by $\smash{\int_{\mathscr D}} \smash{\{\bar\psi,R_S^{-2}(\psi_\sigma-\tfrac{2}{3}\psi_{\sigma^2}) - \bar\xi\}_{xy}}\,d^2x$. By (\mbox{\ref{eq:IL01-BCs}b}) this is equal to $\smash{(\int_{\partial \mathscr D_-} - \int_{\partial \mathscr D_+})}\smash{(R_S^{-2}(\psi_\sigma - \tfrac{2}{3}\psi_{\sigma^2}) - \bar\xi)}\nabla^\perp\bar\psi\cdot\hat{\mathbf n}ds$, upon invoking Stokes' theorem. By (\mbox{\ref{eq:IL01-BCs}b}) each term of this sum individually vanishes. This is unlike the IL$^{(0,1)}$PE, in which case $\partial \mathscr D_\pm$ must be levelwise isopycnic, making the IL$^{(0,1)}$QG less restrictive than the IL$^{(0,1)}$PE. (In that model and the present one, if this condition is imposed initially, then it is preserved for all time.)}
\label{rem:Kelvin}
\end{remark}

\subsubsection{Circulation of the velocity}\label{sec:coasts}

Consider
\begin{equation}
    \gamma_\pm := \int_{\partial \mathscr D_\pm} \nabla\bar\psi\cdot\hat{\mathbf n}ds = \int_0^L\partial_y\vert_{y=0,W}\bar\psi\,dx,
\end{equation}
defining the \emph{velocity circulations} along the coasts of the zonal channel domain.  Note the difference with \eqref{eq:K}, which we have referred to as a Kelvin circulation for the IL$^{(0,1)}$QG.  We will \emph{impose} the condition
\begin{equation}
    \dot\gamma_\pm = 0.
    \label{eq:dotgamma}
\end{equation}
Constancy of $\gamma_\pm$ is a boundary condition, in addition to \eqref{eq:IL01-BCs}, which \emph{guarantees} energy preservation (cf.\ Section \ref{sec:energy-momentum-casimir}). This is not a peculiarity of the IL$^{(0,1)}$QG, but a rather general aspect of QG models.

\remove[RA]{We note that when $\mathscr D$ is replaced by $\mathscr D_t$ in the definition of the Kelvin circulation \mbox{\eqref{eq:K}}, one has the relationship}
\remove[RA]{$\mathcal K = \bar\psi\vert_{\partial \mathscr D_-}\gamma_- + \bar\psi\vert_{\partial \mathscr D_+}\gamma_+ + \int_{\mathscr D} \beta y - R_S^{-2}(\bar\psi - \psi_\sigma + \tfrac{2}{3}\psi_{\sigma^2})\,d^2x,$}
\remove[RA]{as it follows by using \mbox{\eqref{eq:inv}}.  By Remark \mbox{\ref{rem:Kelvin}} and (\mbox{\ref{eq:IL01}}b)--(\mbox{\ref{eq:IL01}}c), it follows that}
\remove[RA]{$\frac{d}{dt}\int_{\mathscr D} \bar\psi\,d^2x = 0.$}
\remove[RA]{This result may be seen to express volume preservation with an $O(\mathrm{Ro}^3)$ error.  Indeed, one may deduce this result by starting from the advection equation of $hd^2x$ or local volume conservation law in the IL$^{(0,1)}$PE, viz., $\partial_t h + \nabla\cdot h\bar{\mathbf u} = 0$. Then one may use \mbox{\eqref{eq:h}} and apply the QG asymptotic expansion implied by the scaling \mbox{\eqref{eq:QGscaling}} noting that, in the IL$^{(0,1)}$PE, $\bar\vartheta$ and $\vartheta_\sigma$ are materially conserved. That is, $\bar\vartheta$ and $\vartheta_\sigma$ represent advected scalar quantities, evolving therefore as $\partial_t\bar\vartheta + \bar{\mathbf u}\cdot\nabla \bar\vartheta = 0$ and $\partial_t\vartheta_\sigma + \bar{\mathbf u}\cdot\nabla \vartheta_\sigma = 0$, respectively. However, nothing of this is available explicitly in the QG limit. In other words, \mbox{\eqref{eq:dotgamma}} cannot be deduced from \mbox{\eqref{eq:vol}} by working with the IL$^{(0,1)}$QG model equations \mbox{\eqref{eq:IL01}}.}

\subsection{Invariant subspaces}\label{sec:invariant}

If system \eqref{eq:IL01} is initialized from $\psi_\sigma, \psi_{\sigma^2} = \const$, then these variables preserve their initial constant values at all times.  In other words, the subspace $\{\psi_\sigma,\psi_{\sigma^2} = \const\}$ is an invariant subspace of the IL$^{(0,1)}$QG. The dynamics on this invariant subspace is formally the same as that in the adiabatic case, that is, the HLQG, wherein the potential vorticity, given by $\bar\xi = \nabla^2\bar\psi - R^{-2}\bar\psi + \beta y$, is materially conserved. This holds formally because $\psi_{\sigma^2}$ actually is a perturbation on a reference uniform stratification.  But this is reflected in \eqref{eq:IL01} only through the stratification parameter $S$. The HLQG and IL$^{(0,1)}$QG potential vorticities on $\{\psi_\sigma,\psi_{\sigma^2} = \const\}$ differ by unimportant constants and by the Rossby radius of deformation being smaller for $S\neq 0$. In turn, if \eqref{eq:IL01} is initialized from $\psi_{\sigma^2} = \const$, then this is preserved for all time. The dynamics on this invariant subspace is formally the same as that of the IL$^0$QG, with the caveats noted above. Similarly, in the IL$^0$QG, obtained from the IL$^{(0,1)}$QG by ignoring $\psi_{\sigma^2}$ and setting $S = 0$, the subspace $\{\psi_\sigma = \const\}$ is invariant and on this invariant subspace the dynamics coincides with that of the HLQG, exactly in this case.

\subsection{Hamiltonian structure}\label{sec:hamiltonian}

We begin with a definition that is helpful to understand operations in the rest of the paper.  This is followed by an assumption, which enables us to frame the Hamiltonian structure of the IL$^{(0,1)}$QG in a domain with boundaries consistent with the aforementioned definition.

\begin{definition}[Functional, functional derivatives, and variations]
A functional $\mathcal F$ of scalar fields $\varphi^j(\mathbf x,t)$, $j = 1,\dotsc,J$, denoted $\mathcal F[\varphi]$, is a map from a vector space $V$ to $\mathbb R$. The functional derivatives of $\mathcal F$ are defined as the unique elements $\smash{\frac{\delta\mathcal F}{\delta \varphi^j}}$, $j = 1,\dotsc,J$, satisfying
\begin{equation}
    \left.\frac{d}{d\varepsilon}\right\vert_{\varepsilon = 0}\mathcal F[\varphi + \varepsilon\delta \varphi] =: \sum_{j=1}^J\int_{\mathscr D} \frac{\delta \mathcal F}{\delta \varphi^j}\delta \varphi^j\,d^2x =: \delta \mathcal F[\delta \varphi;\varphi],
\end{equation}
where $\varepsilon$ is a real parameter and $\delta \varphi$ is called the variation of $\varphi$. The functional $\delta\mathcal F$, linear in $\delta \varphi$, is called the first variation of $\mathcal F$. The above is the Gateaux definition of functional derivative, which does not require $V$ to be normed. However, it is convenient to be taken as such, so contributions of the first and higher-order variations to the total variation can be measured. More specifically, the second variation of $\mathcal F$, quadratic in $\delta \varphi$,
\begin{equation}
    \delta^2\mathcal F[\delta \varphi;\varphi] := \left.\frac{d^2}{d\varepsilon^2}\right\vert_{\varepsilon = 0}\mathcal F[\varphi + \varepsilon\delta \varphi] =: \sum_{j,k=1}^J\int_{\mathscr D} \frac{\delta^2 \mathcal F}{\delta \varphi^j\delta \varphi^k}\delta \varphi^j\delta \varphi^k\,d^2x,    
\end{equation}
where $\smash{\frac{\delta^2\mathcal F}{\delta \varphi^j\delta \varphi^k}}$, $i,k = 1,\dotsc, J$, define the second functional derivatives of $\mathcal F$.  Higher-order variations, as well as functional derivatives, are similarly defined, so that the total variation of $\mathcal F$,
\begin{equation} 
    \Delta \mathcal F[\delta \varphi;\varphi] := \mathcal F[\varphi + \delta \varphi] - \mathcal F[\varphi] = \delta\mathcal F[\delta \varphi;\varphi] + \tfrac{1}{2!}\delta^2\mathcal F[\delta \varphi;\varphi] + \tfrac{1}{3!}\delta^3\mathcal F[\delta \varphi;\varphi] + \cdots,
\end{equation}
where $\delta^n\mathcal F = O(\|\delta\varphi\|^n)$ with $\|\cdot\|^2 = \int_\mathscr{D} (\cdot)^2\,d^2x$ is the square of the $L^2$ norm.
\label{def:fun}
\end{definition}

\begin{assumption}
We will assume that all variations in the sense of Definition \ref{def:fun} are restricted to those that preserve the circulation of the velocity along each wall of the zonal channel domain, namely,
\begin{equation}
    \delta\gamma_\pm = 0.
    \label{eq:dgamma}
\end{equation}
\label{ass:gamma}
\end{assumption}

The restriction \eqref{eq:dgamma} enables a Hamiltonian formulation of the IL$^{(0,1)}$QG wherein variational calculus is consistent with Definition \ref{def:fun}. Otherwise, the space space variables must be augmented to include $\gamma_\pm$ as done in \cite{Holm-etal-85}. However, this requires one to define the notion of variational derivative differently than in Definition \ref{def:fun} \cite{Lewis-etal-86}.  

\begin{remark}
While allowing variations of $\gamma_\pm$ adds generality to the Hamiltonian formulation of the IL$^{(0,1)}$QG, the practical consequences of such added generality are minimal.  This is illustrated in \textup{\cite{Ripa-JFM-93}} for the stability problem.    
\end{remark}

Let
\begin{align}
   \mathcal E[\bar\xi,\psi_\sigma,\psi_{\sigma^2}] &:= \frac{1}{2}\int_{\mathscr D}|\nabla\bar\psi|^2 +   R_S^{-2}\bar\psi^2\,d^2x\nonumber\\ &= - \frac{1}{2}\int_{\mathscr D} \bar\psi \left(\bar\xi - R_S^{-2}\left(\psi_\sigma-\tfrac{2}{3}\psi_{\sigma^2}\right) - \beta y\right)\,d^2x
\label{eq:E}
\end{align}
be the energy, where the equality follows from \eqref{eq:IL01-BCs} upon integration by parts. That $\mathcal E$ indeed is a functional of $(\bar\xi,\psi_\sigma, \psi_{\sigma^2})$ follows by noting that
\begin{equation}
   \bar\psi = (\nabla^2 - R_S^{-2})^{-1}\big(\bar\xi - R_S^{-2}(\psi_\sigma-\tfrac{2}{3}\psi_{\sigma^2}) - \beta y\big),   
\end{equation}
where $(\nabla^2 - R_S^{-2})^{-1}$ is interpreted in terms of the relevant Green’s function of the elliptic problem (\ref{eq:IL01}d).

Now, consider 
\begin{align}
   \{\mathcal F, \mathcal G\}
   &:= 
   \int_{\mathscr D}
   \begin{pmatrix}
      \frac{\delta \mathcal F}{\delta\bar\xi}\\
      \frac{\delta \mathcal F}{\delta\psi_\sigma}\\
      \frac{\delta \mathcal F}{\delta\psi_{\sigma^2}}
   \end{pmatrix}^\top
   \begin{pmatrix}
      -\{\bar\xi,\cdot\}_{xy} & -\{\psi_\sigma,\cdot\}_{xy} & -\{\psi_{\sigma^2},\cdot\}_{xy}\\
      -\{\psi_\sigma,\cdot\}_{xy} & 0 & 0\\
      -\{\psi_{\sigma^2},\cdot\}_{xy} & 0 & 0
   \end{pmatrix}
   \begin{pmatrix}
      \frac{\delta\mathcal G}{\delta\bar\xi}\\
      \frac{\delta\mathcal G}{\delta\psi_\sigma}\\
      \frac{\delta\mathcal G}{\delta\psi_{\sigma^2}}\\
   \end{pmatrix}
   d^2x\nonumber\\
   &=
   \int_{\mathscr D} \bar\xi\left\{\frac{\delta\mathcal F}{\delta\bar\xi},\frac{\delta\mathcal G}{\delta\bar\xi}\right\}_{xy}+ \sum_{n=1}^2 \psi_{\sigma^n} \left(\left\{\frac{\delta\mathcal F}{\delta\bar\xi},\frac{\delta\mathcal G}{\delta\psi_{\sigma^n}}\right\}_{xy} + \left\{\frac{\delta\mathcal F}{\delta\psi_{\sigma^n}},\frac{\delta\mathcal G}{\delta\bar\xi}\right\}_{xy}\right) \,d^2x
   \label{eq:LP}
\end{align}
for any $\mathcal F,\mathcal G[\bar\xi,\psi_\sigma,\psi_{\sigma^2}]$, where the second equality follows upon integrating parts and requiring
\begin{equation}
   \nabla^\perp\frac{\delta \mathcal F}{\delta\bar\xi}\cdot \hat{\mathbf n}\vert_{\partial \mathscr D_\pm} = 0,\quad 
   \nabla^\perp\frac{\delta \mathcal F}{\delta\psi_{\sigma}}\cdot \hat{\mathbf n}\vert_{\partial \mathscr D_\pm} = 0,\quad
   \nabla^\perp\frac{\delta \mathcal F}{\delta\psi_{\sigma^2}}\cdot \hat{\mathbf n}\vert_{\partial \mathscr D_\pm} = 0,
   \label{eq:adm}
\end{equation}
for all $\mathcal F[\bar\xi,\psi_\sigma,\psi_{\sigma^2}]$. It should be noted that periodicity in $x$ accounts for the vanishing of the corresponding flux integral across the (oriented) boundary piece $\smash{\partial\tilde{\mathscr D}} := \smash{\partial\mathscr D\setminus \partial\mathscr D_-\cup \partial\mathscr D_+}$. Explicitly, these integrals are of the form $\smash{\int_{\partial\tilde{\mathscr D}}}ab\nabla^\perp c\cdot\hat{\mathbf n}ds$, where $a,b,c(\mathbf x,t)$ are $x$-periodic. Noting that $\mathbf nds = -d\mathbf x^\perp$, one computes $\smash{\int_{\partial\tilde{\mathscr D}}}ab\nabla^\perp c\cdot\hat{\mathbf n}ds = \smash{\int_0^W} ab(-\partial_yc)\vert_{x=L}dy + \smash{\int_W^0} ab(-\partial_yc)\vert_{x=0}dy = \smash{\int_0^W} \big(ab\partial_y c\vert_{x=0} - ab\partial_y c\vert_{x=L}\big)\,dy = 0$. 

\remove[RA]{Functionals satisfying \mbox{\eqref{eq:adm}} are referred to as admissible functionals in \mbox{\textup{\cite{McIntyre-Shepherd-87}}}. If boundary conditions \mbox{\eqref{eq:adm}} are not imposed, then an appropriate redefinition of the functional derivative is needed for the second equality in \mbox{\eqref{eq:LP}} to hold, yet not without the imposition of some condition on the boundary terms; \mbox{cf.\ \textup{\cite{Lewis-etal-86}}}.}

For all $\mathcal F,\mathcal G,\mathcal H[\bar\xi,\psi_\sigma,\psi_{\sigma^2}]$, the bracket defined in \eqref{eq:LP} satisfies: $\{\mathcal F + \mathcal G, \mathcal H\} = \{\mathcal F, \mathcal H\} + \{\mathcal G, \mathcal H\}$ (bilinearity); $\{\mathcal F,\mathcal G\} = - \{\mathcal G,\mathcal F\}$ (antisymmetry); $\{\{\mathcal F,\mathcal G\},\mathcal H\} + \operatorname{\circlearrowleft} = 0$ (Jacobi identity); and $\{\mathcal F\mathcal G,\mathcal H\} = \mathcal F\{\mathcal G,\mathcal H\} + \{\mathcal F,\mathcal H\}\mathcal G$ (Leibniz rule). An explicit proof of the Jacobi identity is given in \cite{Beron-21-POFb}. These properties make $\{\,,\hspace{.01cm}\}$ a Poisson bracket and by its linear dependence on $(\bar\xi,\psi_\sigma,\psi_{\sigma^2})$ it is classified as of Lie--Poisson type.  Such noncanonical brackets inherit the properties above from analogous properties of $\{\,,\hspace{.01cm}\}_{xy}$, the canonical bracket in $\mathbb R^2$.

Since 
\begin{equation}
    \delta\mathcal E = - \int_{\mathscr D} \bar\psi\delta\bar\xi + R_S^{-2}\bar\psi \delta\psi_\sigma + \tfrac{2}{3}R_S^{-2}\bar\psi\delta\psi_{\sigma^2}\,d^2x,
    \label{eq:dE}
\end{equation}
where the boundary terms have cancelled out by Assumption \ref{ass:gamma}, it follows that
\begin{equation}
    \frac{\delta \mathcal E}{\delta\bar\xi} = -\bar\psi,\quad
    \frac{\delta \mathcal E}{\delta\psi_\sigma} = R_S^{-2}\bar\psi,
    \quad\frac{\delta \mathcal E}{\delta\psi_{\sigma^2}} = -\tfrac{2}{3}R_S^{-2}\bar\psi.
\end{equation}
Equations (\ref{eq:IL01}a)--(\ref{eq:IL01}b) then follow as
\begin{equation} 
    \partial_t\bar\xi = \{\bar\xi, \mathcal E\},\quad
    \partial_t\psi_\sigma = \{\psi_\sigma, \mathcal E\},\quad
    \partial_t\psi_{\sigma^2} = \{\psi_{\sigma^2}, \mathcal E\}.
\end{equation}
In other words, the IL$^{(0,1)}$QG model constitutes a Lie--Poisson Hamiltonian system in the variables $(\bar\xi,\psi_\sigma,\psi_{\sigma^2})$ with Hamiltonian given by \eqref{eq:E} and Lie--Poisson bracket defined by \eqref{eq:LP}. More generally, one has
\begin{equation}
    \dot{\mathcal F} = \{\mathcal F,\mathcal E\}
    \label{eq:dotF}
\end{equation}
for all $\mathcal F[\bar\xi,\psi_\sigma,\psi_{\sigma^2}].$

The IL$^0$QG is also Lie--Poisson Hamiltonian, with Hamiltonian given by \eqref{eq:E} with $R_S$ replaced by $R$ and Lie--Poisson bracket given by \eqref{eq:LP} with $\psi_{\sigma^2}$ ignored. The resulting bracket is found in \cite{Warneford-Dellar-13}, which is a particular version of the one derived by \cite{Ripa-RMF-96} for a system more general than the IL$^0$QG.  A (Lie--Poisson) deformation of the IL$^0$QG bracket, i.e., with the same kernel 
(cf.\ Section \ref{sec:integrals}), appears in \cite{Holm-Luesink-21}. Interestingly, low-$\beta$ magnetohydrodynamics \cite{Morrison-Hazeltine-84} and incompressible, nonhydrostatic, Boussinesq fluid dynamics on a vertical plane share this bracket \cite{Benjamin-84}. 

\change[RA]{Geometric mechanics views ideal fluid motion on a fluid container $\mathscr D$, such as the zonal channel domain of interest here, as a mapping of initial conditions at time $t=0$, regarded as labels taking values on $\mathscr D$, or configuration manifold, which is acted upon by the Lie group of diffeomorphisms \mbox{\textup{\cite{Marsden-etal-84}}}.  This way motion on $\mathscr D$ is lifted to the group and identified with a curve on the group, whose left action on $\mathscr D$ produces the fluid trajectories. The tangent lift of the velocity on $\mathscr D$ is given by the action of the group on the right. This lives on the tangent space to the group at the identity, arranged to happen at $t=0$. This represents a vector space, which together with a binary operation, namely, the Lie bracket, forms the Lie algebra of the group.  The relevant Lie algebra for IL$^{(0,1)}$QG dynamics is $\mathfrak{sdiff}(\mathscr D)$, that is, the Lie algebra of the Lie group $\mathrm{SDiff}(\mathscr D)$ of area preserving diffeomorphisms of the zonal channel domain $\mathscr D$.  The corresponding vector space can thus be identified with that of smooth time-dependent functions on $\mathscr D$, $C^\infty(\mathscr D)$, and the Lie bracket with the canonical Poisson bracket, $\{\,,\hspace{.01cm}\}_{xy}$, given in \mbox{\eqref{eq:canonical}} \mbox{\textup{\cite{Weinstein-83}}}. By identifying the dual (with respect to the $L^2$ inner product) of $C^\infty(\mathscr D)$ with itself, the Lie--Poisson bracket \mbox{\eqref{eq:LP}} represents a product for a realization of a Lie enveloping algebra on functionals on the dual of $\mathfrak{sdiff}(\mathscr D) \ltimes C^\infty(\mathscr D)^2$, that is, the Lie algebra of the Lie group obtained by extending $\mathrm{SDiff}(\mathscr D)$ by semidirect product to the vector space $\mathrm{SDiff}(\mathscr D) \times C^\infty(\mathscr D)^2$ with the (induced) representation of $\mathfrak{sdiff}(\mathscr D)$ on $C^\infty(\mathscr D)$ given by $\{\,,\hspace{.01cm}\}_{xy}$ \mbox{\textup{\cite{Marsden-Morrison-84}}}.}{For a geometric mechanics interpretation of the IL$^{(0,1)}$QG, where its Lie--Poisson bracket \mbox{\eqref{eq:LP}} is associated with the Lie algebra of the group of area-preserving diffeomorphisms extended by a semidirect product with that of the vector field formed by two copies of the space of smooth functions---this arising from broken symmetry---the interested reader is referred to \mbox{\cite{Beron-Luesink-25}}.}

\subsection{Topographic forcing}\label{sec:topo}

Let $\psi_0(\mathbf x)$ be an arbitrary scalar function on the zonal channel domain $\mathscr D$ of length-squared-over-time units, conveniently assumed to satisfy $\nabla\psi_0 \times \hat{\mathbf n}\vert_{\partial \mathscr D} = 0$. Let $L_0 = \const$ have units of length. The Lie--Poisson Hamiltonian structure of the IL$^{(0,1)}$QG is not spoiled if (\ref{eq:C-IL01}a) is modified as
\begin{equation}
    \partial_t\bar\xi + \big\{\bar\psi,\bar\xi - R_S^{-2} (\psi_\sigma-\tfrac{2}{3}\psi_{\sigma^2})\big\}_{xy} = L_0^{-2}\{\psi_0,\psi_\sigma-\tfrac{2}{3}\psi_{\sigma^2}\}_{xy}.
    \label{eq:topo}
\end{equation}
\change[RA]{which may be viewed as a topographic forcing to $\bar\xi$.}{The term on the right-hand side of \mbox{\eqref{eq:topo}} may be viewed as forcing resulting from imposing an irregular topography on the rigid lid separating the active layer from the atmosphere above. This extends an earlier \mbox{\cite{Holm-etal-21}} interpretation of a similar term in the case where the active layer rests on bottom topography while having a free top boundary with the atmosphere.} System \eqref{eq:C-IL01} with (\ref{eq:C-IL01}a) replaced with \eqref{eq:topo} is obtained from the Lie--Poisson bracket \eqref{eq:LP} with the Hamiltonian given by
\begin{equation}
    \mathcal E_0[\bar\xi,\psi_\sigma,\psi_{\sigma^2}] := \frac{1}{2}\int_{\mathscr D} |\nabla\bar\psi|^2 + R_S^{-2}\bar\psi^2 - 2L_0^{-2}\psi_0(\psi_\sigma-\tfrac{2}{3}\psi_{\sigma^2})\,d^2x. 
\end{equation}
The change of the Kelvin circulation $\mathcal K$ along a material loop $\partial \mathscr D_t$ is given by $\smash{\int_{\mathscr D_t}} [R_S^{-2}\bar\psi + L_0^{-2}\psi_0, \psi_\sigma-\tfrac{2}{3}\psi_{\sigma^2}]\,d^2x$, rather than \eqref{eq:dotK}. As in the unforced case, i.e., with $L = 0$, $\mathcal K$ is conserved when $\partial \mathscr D_t$ is levelwise isopycnic and along the walls $\partial \mathscr D_\pm$ of $\mathscr D$. 

\subsection{Conservation laws}\label{sec:integrals}

\subsubsection{Energy, zonal momentum, and Casimirs}\label{sec:energy-momentum-casimir}

The noncanonical Hamiltonian formalism enables the connection of conservation laws with symmetries through \emph{Noether's theorem} as follows:

\begin{theorem}[Noether for noncanonical Hamiltonians]\label{thm:noether}
    Let $P$ be a \emph{Poisson manifold}, i.e., an infinite-dimensional smooth manifold endowed with a Poisson bracket or structure $\{\,,\hspace{.01cm}\} : C^\infty(P) \times C^\infty(P) \to C^\infty(P)$.  The evolution of any functional $\mathcal F[\varphi] \in C^\infty(P) : P \to \mathbb R$ is controlled by $\smash{\dot{\mathcal F}} = \{\mathcal F, \mathcal E\}$ with the dynamics specified by $\mathcal E[\varphi]$, the Hamiltonian.  Consider the one-parameter family of variations induced by a functional $\mathcal G[\varphi]$ defined by $\delta_{\mathcal G} := - \varepsilon \{\mathcal G, \cdot\}$, where $\varepsilon > 0$ is small. The change induced by $\mathcal G$ on $\mathcal E$ is
    \begin{equation}
        \Delta_\mathcal G\mathcal E := \mathcal E[\varphi + \delta_\mathcal G\varphi] - \mathcal E[\varphi] \sim \varepsilon\{\mathcal E,\mathcal G\} = - \varepsilon \dot{\mathcal G},
    \end{equation}
    where $\sim$ means asymptotically as $\varepsilon\downarrow 0$. Moreover,
    \begin{equation}
        \frac{d}{dt}\Delta_\mathcal G \mathcal F - \Delta_\mathcal G\frac{d\mathcal F}{dt} \sim \varepsilon\left\{\mathcal F,\frac{d\mathcal G}{dt}\right\} 
        \label{eq:noether}
    \end{equation}
    for any $\mathcal F[\varphi]$. The following holds:
    \begin{enumerate}
        \item Symmetry of $\mathcal E$ under the transformation induced by $\mathcal G$ implies conservation of $\mathcal G$ and vice versa.
        
        \item If $\mathcal G$ is an integral of motion, then the transformation generated by $\mathcal G$ represents a symmetry in the general sense that the result of making a transformation and letting the time run is independent of the order in which these operations are performed.
    \end{enumerate}
\end{theorem}

A statement of Theorem \ref{thm:noether}(i) can be found in \cite{Shepherd-90}.  Theorem \ref{thm:noether}(ii) is due to \cite{Ripa-RMF-92a}. In both cases, if $\mathcal G$ is conserved, then it transforms solutions into solutions, i.e., $\partial_t\varphi = \{\varphi,\mathcal E\}$ transforms into $\partial_t\Delta_\mathcal G\varphi = \Delta_\mathcal G\partial_t\varphi  = \Delta_\mathcal G\{\varphi,\mathcal E\} = \{\Delta_\mathcal G\varphi,\mathcal E\}$ under the transformation induced by $\mathcal G$.

\remove[RA]{The Poisson manifold $P$ for the IL$^{(0,1)}$QG is given by the geometric dual of $\mathfrak{sdiff}(\mathscr D) \ltimes C^\infty(\mathscr D)^2$; \mbox{cf.\ Remark \ref{rem:geo}}.}

Note that $-\mathcal E$ is the generator of time shifts $t\to t - \varepsilon$ since $\delta_{-\mathcal E} \varphi = \varepsilon\{\mathcal E, \varphi\} = -\varepsilon\partial_t\varphi$.  By Theorem \ref{thm:noether}(i), conservation of $\mathcal E$ is related to symmetry of $\mathcal E$ under selfinduced transformations, as is the case of the IL$^{(0,1)}$QG's energy, given in \eqref{eq:E}. Let $\mathcal M[\varphi]$, identified with the \emph{zonal momentum}, be defined as the generator of the transformation $x\to x - \varepsilon$.  That is, $\delta_\mathcal M\varphi = - \varepsilon \{\mathcal M, \varphi\} := -\varepsilon\partial_x\varphi$, which is satisfied in the IL$^{(0,1)}$QG, i.e., with $\varphi = (\bar\xi, \psi_\sigma, \psi_{\sigma^2})$ and bracket \eqref{eq:LP}, by
\begin{equation}
    \mathcal M[\bar\xi, \psi_\sigma, \psi_{\sigma^2}] = \int_{\mathscr D} y\bar\xi\,d^2x.  
\end{equation}
Theorem \ref{thm:noether}(i) links constancy of $\mathcal M$ with symmetry of the IL$^{(0,1)}$QG's energy \eqref{eq:E} under zonal translations.

By Theorem \ref{thm:noether}(ii), conservation of $\mathcal E$ and $\mathcal M$ imply that the transformations they induced on any functional represents a symmetry in the general sense that transforming and letting the time run commute.  But the reciprocal of the latter is not strictly true.  Indeed, if the left-hand-side of \eqref{eq:noether} vanishes, then $\dot{\mathcal G} = 0$ is equal to a \emph{distinguished function}. Namely, a function of functionals $\mathcal C[\varphi]$, called \emph{Casimirs}, which are nontrivial solutions of 
\begin{equation}
    \{\mathcal F, \mathcal C\} = 0\,\forall\mathcal F.   
\end{equation}
The Casimirs form the kernel of $\{\,,\hspace{.01cm}\}$ and, since $\mathcal F$ includes $\mathcal E$, they represent integrals of motion. However, they do not generate any variation, i.e., $\delta_\mathcal C\varphi = -\varepsilon\{\mathcal C,\varphi\} = 0$, and thus are not related to explicit symmetries, i.e., symmetries that are visible in the  Eulerian variables that form the phase space in which the dynamics are being viewed. The Casimirs of the IL$^{(0,1)}$QG's bracket \eqref{eq:LP} are given by
\begin{equation}
    \mathcal C_{a,F}[\bar\xi, \psi_\sigma, \psi_{\sigma^2}] := \int_{\mathscr D}  a\bar\xi + F(\psi_\sigma, \psi_{\sigma^2})\,d^2x
    \label{eq:C-IL01} 
\end{equation}
for any constant $a$ and arbitrary function $F$.  In \cite{Beron-21-POFb} a proof for the IL$^{(0,1)}$PE's Casimirs was given (cf.\ Remark \ref{rem:C-IL01PE}, below). Here we present the proof for the QG case. A genuine Casimir $\mathcal C[\bar\xi,\psi_\sigma,\psi_{\sigma^2}]$ must satisfy:
\begin{equation}
    \begin{pmatrix}
        \{\bar\xi,\cdot\,\}_{xy} & \{\psi_\sigma,\cdot\,\}_{xy} & \{\psi_{\sigma^2},\cdot\,\}_{xy}\\
        \{\psi_\sigma,\cdot\,\}_{xy} & 0 & 0\\
       \{\psi_{\sigma^2},\cdot\,\}_{xy} & 0 & 0
    \end{pmatrix}
    \begin{pmatrix}
        \frac{\delta \mathcal C}{\delta\bar\xi}\\
        \frac{\delta \mathcal C}{\delta\psi_\sigma}\\
        \frac{\delta \mathcal C}{\delta\psi_{\sigma^2}}
    \end{pmatrix}
    =
    \begin{pmatrix}
        0\\ 0\\ 0
    \end{pmatrix}
    .
\end{equation}
From the last two rows, it is clear that $\smash{\frac{\delta\mathcal C}{\delta\bar\xi}}$ must be a constant. This makes the first term in the first row to vanish.  The last two terms vanish by adding the integral of an arbitrary function of $(\psi_\sigma,\psi_{\sigma^2})$, showing that \eqref{eq:C-IL01} indeed is a valid Casimir.  In fact, there exists no Casimir more general than \eqref{eq:C-IL01}. 

\begin{remark}
    The admissibility conditions \eqref{eq:adm} are satisfied by the Casimir $\mathcal C_{a,F}$, given in \eqref{eq:C-IL01}, provided that $\psi_\sigma$ and $\psi_{\sigma^2}$ are constant along $\partial\mathscr D_\pm$, that is, provided that the coasts of the zonal channel domain $\mathscr D$ are isopycnic.  This same condition must hold in the IL$\smash{^{(0,1)}}$PE for its Casimirs, given by $\mathcal C_{a,F}[\bar{\mathbf u}, h, \bar\vartheta, \vartheta_\sigma] := \smash{\int_{\mathscr D}}  a\bar q h+ F(\bar\vartheta, \vartheta_\sigma)\,d^2x$ for $a=\const$ and every $F$ where $\bar q$ is the potential vorticity, to commute with any functional in the corresponding bracket.  This Casimir family reduces to $\mathcal C_{a,F}$ in the QG limit.
    \label{rem:C-IL01PE}
\end{remark}

The conservation laws above can all be verified directly.  Indeed, after integrating by parts with boundary conditions \eqref{eq:IL01-BCs} in mind, one computes
\begin{equation}
    \dot{\mathcal E} = \bar\psi\smash{\vert_{\partial \mathscr D_-}} \dot\gamma_- + \bar\psi\smash{\vert_{\partial \mathscr D_+}}\dot\gamma_+,
\end{equation}
\emph{which vanishes when the velocity circulations along the walls of the zonal channel domain are constant}, justifying condition \eqref{eq:dotgamma}.  As for the zonal momentum, one has that
\begin{equation}
    \dot{\mathcal M} = \int_{\mathscr D} (\nabla^2\bar\psi - R_S^{-2}\bar\psi + \beta y)\partial_x\bar\psi\,d^2x = 0,
\end{equation}
where the first equality follows upon cancellation of boundary terms by virtue of \eqref{eq:IL01-BCs} and the second one by using (\ref{eq:IL01-BCs}b), in particular. In turn, verifying the constancy of $\mathcal C_{a,F}$ directly is a straightforward application of boundary conditions \eqref{eq:IL01-BCs}.

Finally, the IL$^0$QG has the same motion integrals as above with $R_S$ replaced by $R$ in the \eqref{eq:E} and, in particular, with \eqref{eq:C-IL01} replaced by
\begin{equation}
    \mathcal C_{F,G}[\bar\xi, \psi_\sigma, \psi_{\sigma^2}] := \int_{\mathscr D}  \bar\xi F(\psi_\sigma) + G(\psi_\sigma)\,d^2x
    \label{eq:C-IL0} 
\end{equation}
for arbitrary $F,G$, which commutes with any functional with the appropriate Lie--Poisson bracket. This Casimir is well known \cite{Morrison-Hazeltine-84, Benjamin-84}. To verify conservation of energy and zonal momentum directly one simply proceeds as above.  To check conservation of \eqref{eq:C-IL0} directly one proceeds similarly as we do to prove constancy of \eqref{eq:I}, below.

\begin{remark}
    In a similar manner as for the Casimir family \eqref{eq:C-IL01} to be genuine, genunity of the Casimir family \eqref{eq:C-IL0} imposes a condition on the boundaries of the flow's zonal channel domain: its coasts must be isopycnic.
\end{remark}

\subsubsection{Weak Casimirs}

\begin{lemma}
   The dynamics of the IL$^{(0,1)}$QG preserve the following infinite family of functionals    
   \begin{equation}
      \mathcal I_F[\bar\xi, \psi_\sigma, \psi_{\sigma^2}] := \int_{\mathscr D}  \bar\xi F(\psi_\sigma - \tfrac{2}{3}\psi_{\sigma^2})\,d^2x
      \label{eq:I}
   \end{equation}
   where $F$ is arbitrary.
   \label{lem:extra}
\end{lemma}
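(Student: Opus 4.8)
The plan is to establish $\dot{\mathcal I}_F = 0$ by direct differentiation, the key structural observation being that the combination appearing in the argument of $F$ is materially advected. First I would introduce the shorthand $\Theta := \psi_\sigma - \tfrac{2}{3}\psi_{\sigma^2}$ and note that, by linearity of $\{\,,\hspace{.01cm}\}_{xy}$, subtracting $\tfrac{2}{3}$ times (\ref{eq:IL01}c) from (\ref{eq:IL01}b) gives $\partial_t\Theta + \{\bar\psi,\Theta\}_{xy} = 0$. Because the canonical bracket $\{\bar\psi,\cdot\}_{xy}$ acts as a derivation (Leibniz rule, hence chain rule), the same holds for any smooth function of $\Theta$, namely $\partial_t F(\Theta) + \{\bar\psi,F(\Theta)\}_{xy} = 0$. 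This is the fact that drives the whole argument, and it is precisely $\Theta$ --- the combination that enters the potential-vorticity equation (\ref{eq:IL01}a) --- that has this property.

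Next I would compute
\[
\dot{\mathcal I}_F = \int_{\mathscr D} \partial_t\bar\xi\, F(\Theta) + \bar\xi\, F'(\Theta)\,\partial_t\Theta\,d^2x,
\]
and substitute $\partial_t\bar\xi = -\{\bar\psi,\bar\xi\}_{xy} + R_S^{-2}\{\bar\psi,\Theta\}_{xy}$ from (\ref{eq:IL01}a) together with $\partial_t\Theta = -\{\bar\psi,\Theta\}_{xy}$. Using $F'(\Theta)\{\bar\psi,\Theta\}_{xy} = \{\bar\psi,F(\Theta)\}_{xy}$, the Leibniz identity $\{\bar\psi,\bar\xi\}_{xy}F(\Theta) + \bar\xi\{\bar\psi,F(\Theta)\}_{xy} = \{\bar\psi,\bar\xi F(\Theta)\}_{xy}$, and $F(\Theta)\{\bar\psi,\Theta\}_{xy} = \{\bar\psi,G(\Theta)\}_{xy}$ with $G' = F$, the integrand collapses to a divergence. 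Concretely it exhibits the local conservation law
\[
\partial_t\big(\bar\xi F(\Theta)\big) = -\big\{\bar\psi,\,\bar\xi F(\Theta) - R_S^{-2}G(\Theta)\big\}_{xy} = -\nabla\cdot\!\Big(\big(\bar\xi F(\Theta) - R_S^{-2}G(\Theta)\big)\nabla^\perp\bar\psi\Big),
\]
the last equality using $\nabla\cdot\nabla^\perp = 0$ so that every Jacobian $\{\bar\psi,\cdot\}_{xy} = \nabla^\perp\bar\psi\cdot\nabla(\cdot)$ is a pure divergence.

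Finally I would integrate over $\mathscr D$ and apply the divergence theorem on the periodic channel, where periodicity in $x$, (\ref{eq:IL01-BCs}b), removes any along-channel contribution, leaving only the wall terms $\oint_{\partial\mathscr D_\pm}\big(\bar\xi F(\Theta) - R_S^{-2}G(\Theta)\big)\,\nabla^\perp\bar\psi\cdot\hat{\mathbf n}\,ds$. On $\partial\mathscr D_\pm$ one has $\nabla^\perp\bar\psi\cdot\hat{\mathbf n} = \pm\,\partial_x\bar\psi$, which vanishes by the no-flow condition (\ref{eq:IL01-BCs}a), so $\dot{\mathcal I}_F = 0$. The only step requiring genuine care is this last one: the vanishing of the boundary flux is what makes the result hold on the bounded channel, and it uses both parts of (\ref{eq:IL01-BCs}); on a domain lacking these conditions the flux need not vanish. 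I would also remark that $\mathcal I_F$ is manifestly not of the Casimir form (\ref{eq:C-IL01}), since $\bar\xi$ is weighted by the non-constant factor $F(\Theta)$, which is the sense in which these invariants lie outside the known families.
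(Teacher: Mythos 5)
Your proof is correct and follows essentially the same route as the paper's: you exploit the material advection of $F(\psi_\sigma - \tfrac{2}{3}\psi_{\sigma^2})$, multiply the potential-vorticity equation by $F$, absorb the source term $R_S^{-2}F\{\bar\psi,\Theta\}_{xy}$ into a divergence via an antiderivative $G$ with $G'=F$, and kill the boundary flux using periodicity and the no-normal-flow condition. The only (cosmetic) difference is that the paper keeps the antiderivative term as a separately conserved integral of the form $\int_{\mathscr D} G(\psi_\sigma,\psi_{\sigma^2})\,d^2x$ and subtracts it at the end, whereas you fold it directly into a single local conservation law.
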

\begin{proof}
   Since $\nabla\cdot\nabla^\perp\bar\psi = 0$, from (\ref{eq:IL01}b)--(\ref{eq:IL01}c) one has
   \begin{equation}
      \partial_t G(\psi_\sigma, \psi_{\sigma^2}) + \nabla\cdot G(\psi_\sigma, \psi_{\sigma^2}) \nabla^\perp\bar\psi = 0
      \label{eq:k}
   \end{equation}
   for every $G$. Integrating over $\mathscr D$,
   \begin{equation}
      \int_{\mathscr D} G(\psi_\sigma, \psi_{\sigma^2})\,d^2x = \const
      \label{eq:G}
   \end{equation}
   because the flow normal to the channel walls vanishes and the there is periodicity in the along-channel direction.  Now define $\chi := \psi_\sigma - \tfrac{2}{3}\psi_{\sigma^2}$. Multiplying (\ref{eq:IL01}a) by $F(\phi)$ for any $F$ and using \eqref{eq:k} with $G(\psi_\sigma, \psi_{\sigma^2})$ replaced by $F(\chi)$,
   \begin{equation}
      \partial_t\left(\bar\xi F(\chi)\right) + \nabla\cdot \bar\xi F(\chi) \nabla^\perp\bar\psi = R_S^{-2}F(\chi)\nabla\cdot \phi\nabla^\perp\bar\psi.
      \label{eq:F}
   \end{equation}
   Multiplying \eqref{eq:G} by $R_S^{-2}$ with $G(\psi_\sigma, \psi_{\sigma^2})$ replaced by $\int F(\chi)\,d\chi$, and adding the result to \eqref{eq:k}, upon integrating over $\mathscr D$,
   \begin{equation}
      \int_{\mathscr D} \bar\xi F(\chi) + R_S^{-2}\int F(\chi) \,d\chi\,d^2x = \const.
      \label{eq:I2}
   \end{equation}
   More generally, with \eqref{eq:G} in mind, 
   \begin{equation}
      \int_{\mathscr D} \bar\xi F(\chi) + R_S^{-2}\int F(\chi) \,d\chi + G(\psi_\sigma,\psi_{\sigma^2})\,d^2x = \const,
      \label{eq:I3}
   \end{equation}
   from which conservation of \eqref{eq:I} follows by choosing $G(\psi_\sigma,\psi_{\sigma^2}) = - R_S^{-2}\int F(\chi) \,d\chi$ for arbitrary $F$.
\end{proof}

Three salient observations are warranted:
\begin{enumerate}
    \item The integrals of motion \eqref{eq:I} do \emph{not} constitute Casimirs. Indeed,
    \begin{equation}
       \begin{pmatrix}
          \{\bar\xi,\cdot\,\}_{xy} & \{\psi_\sigma,\cdot\,\}_{xy} & \{\psi_{\sigma^2},\cdot\,\}_{xy}\\
          \{\psi_\sigma,\cdot\,\}_{xy} & 0 & 0\\
          \{\psi_{\sigma^2},\cdot\,\}_{xy} & 0 & 0
        \end{pmatrix}
      \begin{pmatrix}
         \frac{\delta \mathcal I_F}{\delta\bar\xi} = F\\
         \frac{\delta \mathcal I_F}{\delta\psi_\sigma} = \bar\xi F'\\
         \frac{\delta \mathcal I_F}{\delta\psi_{\sigma^2}} = -\tfrac{2}{3}\bar\xi F'
      \end{pmatrix}
      =
      \begin{pmatrix}
         0\\
         -\tfrac{2}{3}\{\psi_\sigma,\psi_{\sigma^2}\}_{xy}F'\\
         \{\psi_{\sigma^2},\psi_\sigma\}_{xy}F'
      \end{pmatrix},
    \end{equation}
    where the last two rows in general do not vanish. That is, $\mathcal I_F$ does not commute with every functional in \eqref{eq:LP}.

    \item While conservation laws \eqref{eq:I2} and \eqref{eq:I3} are more general than $\mathcal I_F$, it is the first term in each case, given by $\mathcal I_F$, that does not represent a Casimir of the Lie--Poisson bracket \eqref{eq:LP}.

    \item The family of conservation laws $\mathcal I_F$ constitute Casimirs, but for the IL$^0$QG bracket, namely,
    \begin{equation}
       \{\mathcal F, \mathcal G\}_{\bar\xi\chi} := \int_{\mathscr D} \bar\xi\left\{\frac{\delta\mathcal F}{\delta\bar\xi},\frac{\delta\mathcal G}{\delta\bar\xi}\right\}_{xy} + \chi \left(\left\{\frac{\delta\mathcal F}{\delta\bar\xi},\frac{\delta\mathcal G}{\delta\chi}\right\}_{xy} + \left\{\frac{\delta\mathcal F}{\delta\chi},\frac{\delta\mathcal G}{\delta\bar\xi}\right\}_{xy}\right) \,d^2x,
       \label{eq:LPphi}
    \end{equation}
    for all $\mathcal F,\mathcal G[\bar\xi,\chi]$. The IL$^{(0,1)}$QG dynamics implies the dynamics in the variables $(\bar\xi,\chi)$, i.e., that produced by the Lie--Poisson bracket given by \eqref{eq:LPphi} with Hamiltonian \eqref{eq:E}, yet not \emph{vice versa}. 
\end{enumerate}

The last observation above enables the interpretation of the conservation laws $\mathcal I_F$ as weak Casimirs as they form the kernel of the Lie–Poisson bracket for the potential vorticity evolution independent of the details of the buoyancy field as this field is advected under the flow.  \add[RA]{The connection of conservation of $\mathcal I_F$ with particle relabeling symmetry via Noether's theorem is discussed in \mbox{\cite{Beron-Luesink-25}}.}

\begin{remark}
    The IL$^{(0,1)}$PE conserves $\smash{\int_{\mathscr D}} \bar qhF(\bar\vartheta - \smash{\tfrac{1}{3}}\vartheta_\sigma)\,d^2x$ for every $F$ (where $\bar q$ is the potential vorticity).  \change[]{These reduce to \mbox{\eqref{eq:I}}.}{These conservation laws reduce to \mbox{\eqref{eq:I}} in the low-frequency limit.}  The difference between the $\smash{\tfrac{1}{3}}$ and $\smash{\tfrac{2}{3}}$ factors is clarified upon noting that, with an $\smash{O(\mathrm{Ro}^2)}$ error, $\bar\vartheta - g_\mathrm{r} - \smash{\tfrac{1}{3}}(\vartheta_\sigma - \tfrac{1}{2}N^2_\mathrm{r}H_\mathrm{r}) = \smash{\frac{2g_\mathrm{r}}{f_0R^2}}(\psi_\sigma - \smash{\tfrac{2}{3}}\psi_{\sigma^2})$. The conservation laws above are not Casimirs either.  However, they represent weak Casimirs in the sense previously discussed.  \remove[]{We speculate that their existence should be attributed to invariance of the Eulerian variables under relabeling of fluid particles. The Noether theorem that should lead to this conclusion differs from the one discussed above as it should arise from the Euler--Poincar\'e variational principle with variations not restricted to vanish at the endpoints of the action integral, as considered by \mbox{\textup{\cite{Cotter-Holm-12}}} for other systems.}
\end{remark}

\section{Linear waves and free energy}\label{sec:waves}

Let $\varphi' := (\bar\xi',\psi'_\sigma,\psi'_{\sigma^2})$ be an infinitesimal-amplitude perturbation on 
\begin{equation}
    \Phi_\mathrm{r} := (- R_S^{-2}(a_1 - a_2 + \tfrac{2}{3}a_3) + \beta y, a_2, a_3),
    \label{eq:ref}
\end{equation}
where $a_1,a_2,a_3$ are arbitrary constants. The latter represents a three-parameter family of \emph{reference states}, i.e., stationary or equilibrium solutions of the IL$^{(0,1)}$QG system \eqref{eq:IL01}, subjected to \eqref{eq:IL01-BCs} and \eqref{eq:dotgamma}, with \emph{no} currents. The energy of such states, $\mathcal E[\Phi_\mathrm{r}] = \frac{1}{2}R_S^{-2}a_1^2WL$. This identically vanishes for the particular class of reference states with $a_1 = 0$.  Such a two-parameter family of reference states has the lowest possible energy and can thus be identified with the \emph{vacuum state} \cite[e.g.,][]{Morrison-Eliezer-86} of the IL$^{(0,1)}$QG.  The components of the perturbation field $\varphi'$ evolve according to
\begin{equation}
    \partial_t\bar\xi' + \beta\partial_x\bar\psi' = 0,\quad
    \partial_t\psi'_{\sigma} = 0,\quad
    \partial_t\psi'_{\sigma^2} = 0,
\end{equation}
controlling the IL$^{(0,1)}$QG linearized dynamics around \eqref{eq:ref}.  Assuming an $(x,y,t)$-dependence of the form of a normal mode, viz., $\exp\mathrm{i}(kx - \omega t)\sin ly$, satisfying the boundary conditions \eqref{eq:IL01-BCs}, it follows that $\psi'_\sigma = 0 = \psi'_{\sigma^2}$ while $\bar\xi' = \nabla^2\bar\psi' - R_S^{-2}\bar\psi' \neq 0$ provided that
\begin{equation}
    \omega = - \frac{k\beta}{k^2 + l^2 + R_S^{-2}}.
\end{equation}
That is, the linear waves supported by the IL$^{(0,1)}$QG are the well-known (planetary) Rossby waves \cite[e.g.,][]{Pedlosky-82}, which do not disturb the reference buoyancy field $g_\mathrm{r} + (1 + \smash{\frac{2z}{H_\mathrm{r}}})\frac{1}{2}N^2_\mathrm{r}H_\mathrm{r}$. They correspond to the Rossby waves of the HLQG in the weak-stratification limit $S\downarrow 0$.  For finite $S$, the (westward) phase speed is in general slower.  For instance, in the infinitely wide zonal channel case, the fastest phase speed, realized in the short-wave limit $(k,l)\downarrow (0,0)$, is $\beta (1 - \tfrac{1}{3}S)R^2$, which can be up to one-third slower than that in the HLQG, given by $\beta R^2$. In addition to the Rossby waves, the IL$\smash{^{(0,1)}}$QG supports an $\omega = 0$, i.e., neutral, mode with $\bar\xi' = R^{-2}_1(\psi'_\sigma - \frac{2}{3}\psi'_{\sigma^2})$ with $\psi'_\sigma, \psi'_{\sigma^2} \neq 0$.  This neutral mode is characterized by potential vorticity changes exclusively due to buoyancy changes which do not alter the fluid velocity ($\bar\psi' = 0)$.

One test of the well-posedness of the the IL$^{(0,1)}$QG is given by the computation of the \emph{free energy} \cite{Ripa-JFM-95}.  Denoted by $\mathcal E_\text{free}$, this is defined as an integral of motion quadratic to the lowest-order in the deviation from a reference state.  When $\mathcal E_\text{free}$ is positive definite, there cannot be a spurious growth of the amplitude of linear waves riding on the reference state as they must preserve $\mathcal E_\text{free}$.  A class of reference states \eqref{eq:ref} with a free energy with the desired property is defined by $\smash{\frac{a_1}{a_2}} < 0$ and $\smash{\frac{a_1}{a_3}} > 0$.  Indeed, for such a class of reference states,
\begin{equation}
    \mathcal E_\text{free}[\bar\xi',\psi'_\sigma,\psi'_{\sigma^2}] = \mathcal E[\bar\psi'] + R_S^{-2}\int_{\mathscr D} \frac{2}{3}\frac{a_1}{a_3}\psi^{'2}_{\sigma^2} - \frac{a_1}{a_2}\psi^{'2}_\sigma\,d^2x,
    \label{eq:free}
\end{equation}
which is manifestly positive definite under the stated conditions.  This follows by adding to the energy \eqref{eq:E} a Casimir \eqref{eq:C-IL01} with $a = a_1$ and $F(\psi_\sigma,\psi_{\sigma^2}) = R_S^{-2}(\smash{\frac{a_1}{3a_3}}\psi_{\sigma^2}^2 - \smash{\frac{a_1}{2a_2}}\psi_\sigma^2)$.  An important observation is that \eqref{eq:free} is not restricted to infinitesimally small normal-mode perturbations.  In fact, \eqref{eq:free} is an exact integral of motion for the fully nonlinear dynamics about \eqref{eq:ref}.  Thus positive definiteness of \eqref{eq:free} prevents the spurious growth of perturbations to \eqref{eq:ref} irrespective of their initial amplitude and structure. 

It is fair to wonder about the dynamical significance of the constants $a_1,a_2,a_3$ that define the family of reference states \eqref{eq:ref} and the relationship among them for $\mathcal E_\text{free}$ to be positive definite.  Consider the situation in which these constants are all set to zero, in which case \eqref{eq:ref} reduces to $\Phi_\mathrm{r} = (\beta y, 0, 0)$, a member of the vacuum state family.  The linear waves riding on this reference state are the same as those discussed above.  However, in this case, the free energy can be shown to be equal to $\mathcal E[\bar\psi']$, which is positive semidefinite. Indeed, by the invertibility relationship \eqref{eq:inv}, it follows that there can be perturbations $(\bar\xi',\psi'_\sigma,\psi'_{\sigma^2})$ that do not change the free energy. Spontaneous growth of the amplitude of such perturbations cannot be constrained by free-energy conservation.  An example is the neutral mode discussed above, for which the free energy identically vanishes. The same considerations apply to the complete family of vacuum states (defined by $a_1 = 0$).  An important observation is that there exist Lyapunov stable states that will arrest the aforementioned spontaneous growth, should it happened, as discussed in Section \ref{sec:bounds}, below.

\begin{remark}
    A result similar to that for perturbations on the vacuum state family in the IL$^{(0,1)}$QG holds for perturbations on the vacuum state in the IL$^{(0,1)}$PE parent model.  In that model, there is a unique reference state, which is the vacuum state, and perturbations on it have a positive-semidefinite free energy associated with \textup{\cite[cf.][]{Ripa-JFM-95}}.
\end{remark}

\section{Stability of baroclinic zonal flow}\label{sec:stability}

Consider $\varphi := (\bar\xi,\psi_\sigma,\psi_{\sigma^2})$, evolving according to \eqref{eq:IL01} subjected to \eqref{eq:IL01-BCs} and \eqref{eq:dotgamma}.  We call $\Phi := (\bar\Xi,\Psi_\sigma,\Psi_{\sigma^2})$ a \emph{basic state} if it is an equilibrium of \eqref{eq:IL01}, subjected to \eqref{eq:IL01-BCs} and \eqref{eq:dotgamma}.  A particular case of basic state is the reference state discussed in Section \ref{sec:waves}, for which the fluid is motionless.  We are interested in the stability of a basic state in the following three senses, sorted in increasing level of strength \cite{Holm-etal-85,Ripa-RMF-00}: 
\begin{enumerate}
    \item The first sense is \emph{spectral stability}, in which case a perturbation $\delta\varphi(\mathbf x,t)$ on $\Phi(\mathbf x)$ is assumed to be a normal mode with an infinitesimally small amplitude, viz., $\delta\varphi = \varepsilon\mathrm{Re}\{\varphi' \exp\mathrm{i}k(x-ct)\} \sin ly + O(\varepsilon^2)$ where $\varepsilon \downarrow 0$, satisfying \eqref{eq:IL01-BCs} and \eqref{eq:dotgamma}.  (Note the slight change of interpretation of the prime notation with respect to that used in Section \ref{sec:waves}.) Clearly, the amplitude of a normal mode cannot remain infinitesimally small unless $\mathrm{Im}(c) = 0$. In such a case, the basic state is said to be spectrally stable.
    
    \item The second sense is \emph{formal stability}. In this case $\delta\varphi$, while assumed small, can have an arbitrary structure. If there exists an integral of motion $\mathcal H[\varphi]$ such that $\delta \mathcal H[\delta\varphi;\Phi] = 0$ and $\delta^2\mathcal H[\delta\varphi;\Phi] > 0$, then the growth of $\delta\varphi$ will be constrained. The constraint is imposed by the quadratic nature of $\delta^2\mathcal H$, which is preserved by the linearized dynamics about $\Phi$.  In these circumstances one says that $\Phi$ is formally stable.  Note that the free energy $\mathcal E_\text{free}$ discussed in Section \ref{sec:waves} is a special case of $\delta^2\mathcal H$ wherein $\Phi$ in motionless.
    
    \item The third sense is \emph{Lyapunov stability}. This relies on the possibility of proving that the total variation $\Delta \mathcal H[\delta\varphi;\Phi]$ is convex for finite-size $\delta\varphi$ of arbitrary structure.  This means showing that $c_1\|\delta\varphi\|^2 \le \Delta \mathcal H \le c_2\|\delta\varphi\|^2$ for $c_1,c_2 = \const$ such that $0 < c_1 \le c_2 < \infty$, where $\|\,\|$ is typically chosen to be an $L^2$ norm.  Assuming that the latter holds true, since $\Delta \mathcal H$ is preserved under the fully nonlinear dynamics, the second inequality can be evaluated at time $t=0$ to get $\|\delta\varphi\|_{t>0} \le \smash{\sqrt{\frac{c_2}{c_1}}} \|\delta\varphi\|_{t=0}$. This implies Lyapunov stability for $\Phi$. More precisely, for every $\varrho > 0$ there exists $\varepsilon > 0$, e.g., $\varepsilon = \smash{\sqrt{\frac{c_1}{c_2}}}\varrho$, such that $\|\delta\varphi\|_{t=0} < \varepsilon$ implies $\|\delta\varphi\|_{t>0} < \varrho$.
\end{enumerate}

Let
\begin{subequations}
\begin{equation}
   \alpha := \frac{\bar U}{U_\sigma},\quad
   \mu := \frac{U_{\sigma^2}}{U_\sigma},\quad
   b := \frac{\beta}{U_\sigma R^2},
\end{equation}
where $\bar U$,  $U_\sigma$, and $U_{\sigma^2}$ are constants, with the latter two required to satisfy
\begin{equation}
    2U_\sigma - U_{\sigma^2} \ge \frac{(S-1)f_0R^2}{2W},\quad
    U_{\sigma^2} < \frac{Sf_0R^2}{4W},
\end{equation}
by static stability; cf.\ \eqref{eq:vertstab}. Our main interest is the four-parameter family of basic states defined by
\begin{equation}
    \bar\Psi = -\bar Uy,\quad
    \Psi_\sigma = - U_\sigma y,\quad
    \Psi_{\sigma^2} = - U_{\sigma^2} y,
\end{equation}
implying
\begin{equation}
    \bar\Xi = - R_S^{-2}\bar\Psi + R_S^{-2} \big(\Psi_\sigma - \tfrac{2}{3}\Psi_{\sigma^2}\big) + \beta y = R_S^{-2}\nu\Psi_\sigma = R_S^{-2}\mu^{-1}\nu\Psi_{\sigma^2},
\end{equation}
where
\begin{equation}
    \nu(S) := 1 + \alpha + s(S)b - \tfrac{2}{3}\mu,\quad
    s(S) := R_S^2R^{-2} = 1 - \tfrac{1}{3}S,   
\end{equation}
\label{eq:BS}%
\end{subequations}
which we introduce to simplify algebraic expressions below. By the thermal-wind balance, \eqref{eq:BS} implicitly corresponds, for parameters fixed, to a meridionally uniform zonal current with quadratic vertical shear, i.e., a baroclinic flow with \emph{vertical curvature} on the $\beta$-plane.  Nondimensional parameters $\alpha$ and $\mu$ thus measure the vertical linear shear and curvature of the jet, respectively. But, by virtue of \eqref{eq:h}--\eqref{eq:vartheta_sigma}, one also has that
\begin{equation}
    \alpha = 1 + \frac{2sg_\mathrm{r}}{H_\mathrm{r}}\frac{H'(y)}{\bar\Theta'(y)} - \frac{H_\mathrm{r}}{6}\frac{(N^2)'(y)}{\bar\Theta'(y)},\quad
    \mu = \frac{H_\mathrm{r}}{4}\frac{(N^2)'(y)}{\bar\Theta'(y)},
    \label{eq:alpha-mu}
\end{equation}
where $H$ is the layer thickness in the basic state, $\bar\Theta$ is the basic vertically averaged buoyancy, and $N^2 = 2\smash{\frac{\Theta_{\sigma}}{H_\mathrm{r}}}$ is the basic state buoyancy frequency squared. This enables further physical insight into the (spectral) stability problem. Finally, nondimensional parameter $b$ measures the strength of the $\beta$ effect, which can be identified with a Charney number.  

\subsection{Spectral stability}\label{sec:spectral}

Let
\begin{equation}
   \lambda := \frac{c - \bar U}{U_\sigma},\quad 
   \kappa := R\sqrt{k^2+l^2},
\end{equation}
representing normalized Doppler shifted phase speed and wavenumber $\mathbf k = (k,l)$ (with $k$ pointing eastward and $l$ northward) magnitude, respectively.  The following eigenvalue problem follows upon proposing a normal-mode perturbation on \eqref{eq:BS}:
\begin{equation}
    \begin{pmatrix}
        \left(s\kappa^2+1\right)\lambda + \alpha + sb & -(\lambda + \alpha) & \frac{2}{3}(\lambda + \alpha)\\
        -1 & -\lambda & 0\\
        -\mu & 0 & -\lambda
    \end{pmatrix}
    \begin{pmatrix}
        \bar\psi'\\
        \psi'_\sigma\\
        \psi'_{\sigma^2}
    \end{pmatrix} 
    =
    0.
\end{equation}
Nontrivial solutions exist provided that the determinant of the matrix of the eigenproblem vanishes.  This leads to the dispersion relation
\begin{equation}
    \lambda = \frac{-\nu \pm \sqrt{\nu^2 - 4\alpha\left(1 - \tfrac{2}{3}\mu\right)\left(s\kappa^2 + 1\right)}}{2 \left(s\kappa^2 + 1\right)}.
    \label{eq:disprel}
\end{equation}
Spectral stability is realized when the perturbation phase speed $c$, or equivalently $\lambda$, is real. A sufficient condition for this is
\begin{equation}
    \alpha\left(1 - \tfrac{2}{3}\mu\right) < 0.
    \label{eq:stab-spec}
\end{equation}
The shaded quadrants in Figure \ref{fig:alphamu} are subsets of the region in the $(\alpha,\mu)$-space defined by \eqref{eq:stab-spec}, namely,
\begin{equation}
    \{\alpha<0\} \cap \{\mu<\tfrac{3}{2}\} \bigcup \{\alpha>0\} \cap \{\mu>\tfrac{3}{2}\},
    \label{eq:stab-spec-set}
\end{equation}
where there is spectral stability of basic state \eqref{eq:BS} for every wavenumber. \emph{This holds independent of the values taken by the Charney number ($b$) and stratification parameter ($S$).} By virtue of \eqref{eq:alpha-mu}, one concludes that the basic states in the upper-right (resp., lower-left) clear quadrant have $H'(y)$ and $\bar\Theta'(y)$ with like (resp., opposing) signs and $\smash{\frac{\bar\Theta'(y)}{(N^2)'(y)}}$ smaller (resp., larger) than $\frac{1}{6}H_\mathrm{r}$. The curves labeled by $\kappa$ bound the (sub)regions of the $(\alpha,\mu)$-plane where there is spectral instability, i.e.,  \eqref{eq:stab-spec} is violated, for selected $b$ and $S$ values.  Note the asymmetry, with less basic states being spectrally unstable toward smaller wavenumbers in the region $\{\alpha < 0\} \cap \{\mu > \frac{3}{2}\}$ than in the region $\{\alpha > 0\} \cap \{\mu < \frac{3}{2}\}$. This behavior reverses with positive values of $b$.

\begin{figure}
    \centering 
    \includegraphics[width=.5\textwidth]{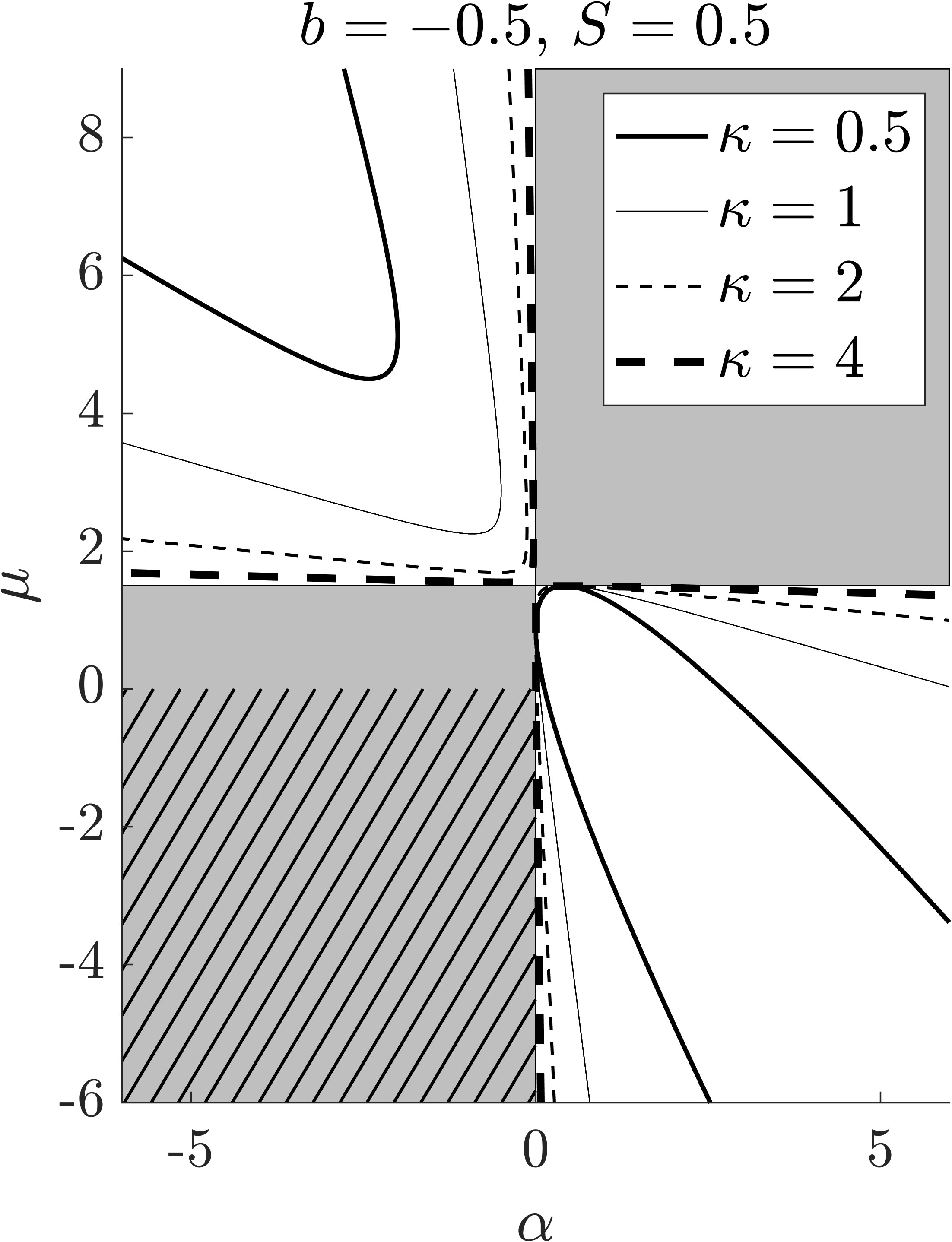}
    \caption{Stability of the basic state family \eqref{eq:BS}, representing baroclinic zonal jets, in the $\alpha := \bar U/U_\sigma$ vs $\mu := U_{\sigma^2}/U_\sigma$ space, respectively measuring the vertical linear shear and curvature of a jet in the family.  In the shaded regions, corresponding to the set $\mathbb S$ defined by $\alpha (1 - \frac{2}{3}\mu) \le 0$, the phase speed of an infinitesimally-small normal-mode perturbation is real for every wavenumber. Within the hatched region, the subset $\mathbb L:= \{\mu < 0\} \cap \{\alpha < 0\}$ of $\mathbb S$, there is stability for finite-size perturbations of arbitrary structure.  Moreover, in $\mathbb L$ the distance, in an $L^2$ sense, of a perturbation to the basic state is at all times bounded by a multiple of its initial distance. This means that in $\mathbb L$ there is Lyapunov stability.  Spectral stability is possible inside the white regions, the complement of $\mathbb S$, $\mathbb S^c$. The curves labeled by the normalized wavenumber magnitude $\kappa := R\smash{\sqrt{k^2 + l^2}}$ bound the $(\alpha,\mu)$-subregions of $\mathbb S^c$ where there is spectral instability for the normalized $\beta$ and stratification parameters, $b := \beta/U_\sigma R^2$ and $S := N^2_\mathrm{r} H_\mathrm{r}/2g_\mathrm{r}$, respectively, as indicated.}
    \label{fig:alphamu}
\end{figure}

More specifically, in the limit $\kappa \downarrow 0$, \eqref{eq:disprel-ass} equals
\begin{equation}
    \lambda = -\tfrac{1}{2}\nu \pm \tfrac{1}{2}\sqrt{\nu^2 - 4\alpha\left(1 - \tfrac{2}{3}\mu\right)},
\end{equation}
which is complex where $\alpha(1 - \frac{2}{3}\mu) > \frac{1}{4}\nu^2 = \frac{1}{4}\big(1 + \alpha + sb - \frac{2}{3}\mu\big)^2$ (and real otherwise, making explicit that \eqref{eq:stab-spec} is sufficient, yet not necessary, for spectral stability). At criticality, one finds that $\mu = \smash{\frac{3}{2}}(1 - \alpha + sb) \pm \smash{\frac{3}{2} \sqrt{(1 - \alpha + sb) - (1 + \alpha + sb)^2 + 4\alpha}}$. This represents a parabola with tangencies on the curve $C_+ := \smash{\{\alpha = 0, \mu > \frac{3}{2}\} \cup \{\alpha > 0, \mu = \frac{3}{2}\}}$, when $b < 0$, and the curve $C_- := \{\alpha = 0, \mu < \frac{3}{2}\} \cup \{\alpha < 0, \mu = \frac{3}{2}\}$, when $b > 0$, making more explicit the asymmetry noted above.  For completeness, we note that in the limit $\kappa \uparrow \infty$ there is spectral stability everywhere in the $(\alpha,\mu)$-space where \eqref{eq:stab-spec} is violated, in such a case for any $b$ and $S$. Indeed,  
\begin{equation}
    \lambda \sim \pm\mathrm{i}\sqrt{\frac{\alpha(1 - \tfrac{2}{3}\mu)}{1 - \tfrac{1}{3}S}}\kappa^{-1}
    \label{eq:disprel-ass}
\end{equation}
asymptotically as $\kappa\uparrow\infty$. The curves $\kappa \uparrow \infty$ in the $(\alpha,\mu)$-plane are given by $C_\pm$.  Their union $\{\alpha = 0, \mu = \frac{3}{2}\}$ corresponds to the set of basic states lying at the boundary of spectral stability. Thus, unlike all other curves in the spectrally unstable regions of the $(\alpha,\mu)$-plane, the curves $C_\pm$ do not bound spectrally unstable states.

In Figure \ref{fig:c} we depict dispersion relation \eqref{eq:disprel} as a function of the wavenumber, more precisely $\smash{\frac{c}{U_\sigma}}$ vs $\kappa$, for selected values of the various basic state parameters. The onset of instability happens at the wavenumber where the two branches of the dispersion relation merge. Included in the plot for reference is the asymptotic expression for $\smash{\frac{c}{U_\sigma}}$ as $\kappa\uparrow\infty$, as it follows from \eqref{eq:disprel-ass}.  Also indicated is $\smash{\lim_{\kappa\uparrow\infty}}\smash{\frac{c}{U_\sigma}} = \alpha$, that is, a real number, irrespective of the parameters that define the basic state. 

\begin{figure}
    \centering
    \includegraphics[width=.5\textwidth]{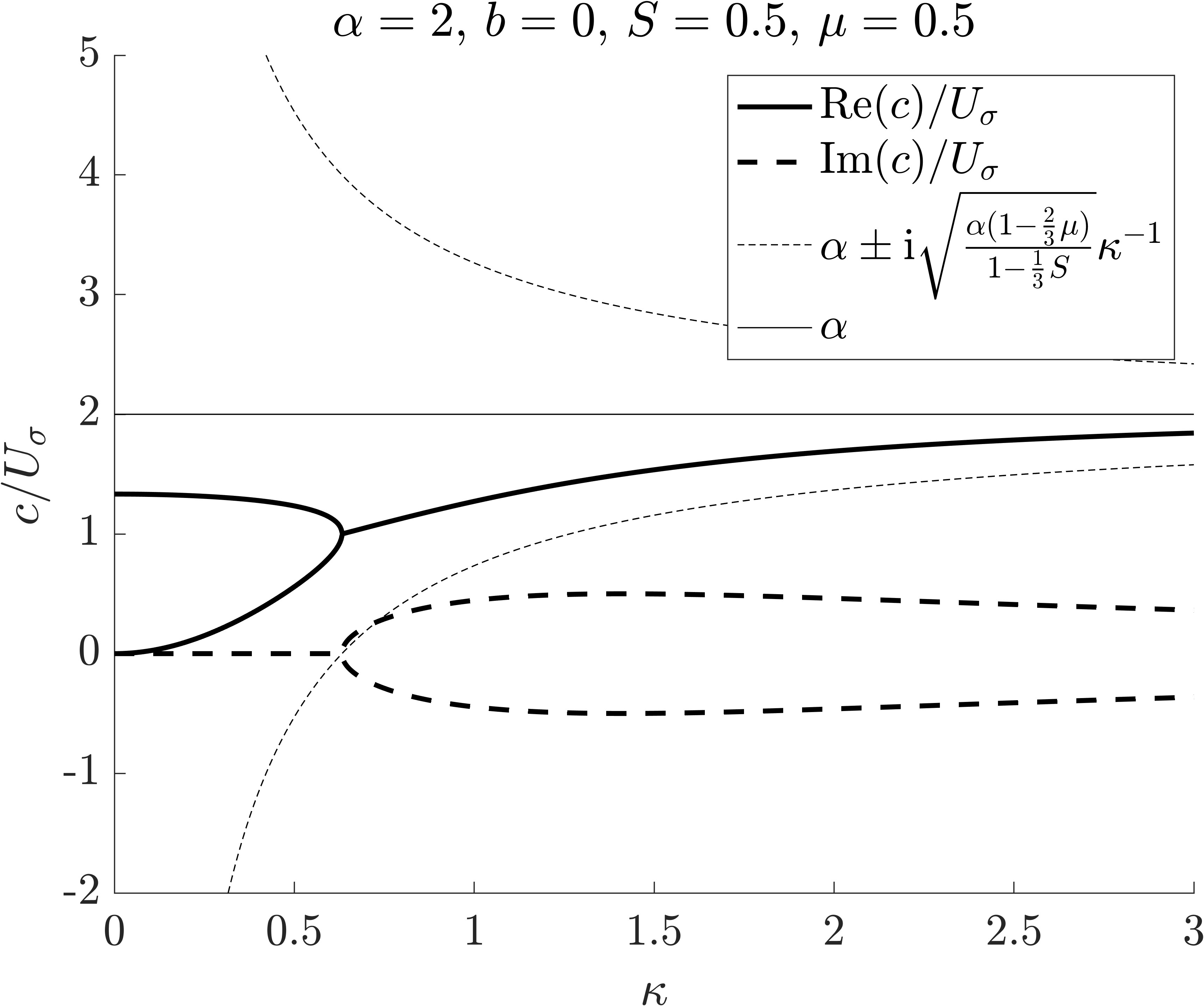}
    \caption{As a function of normalized wavenumber, normalized phase speed for a normal-mode perturbation on the baroclinic zonal jet defined by \eqref{eq:BS} with parameters as indicated. Asymptotic dispersion relation curves as $\kappa\uparrow\infty$ and the corresponding limiting value are included.}
    \label{fig:c}
\end{figure}

In the limit of weak stratification, i.e., $S\downarrow 0$, the length scales $R$ and $L := \smash{\frac{N_\mathrm{r}H_\mathrm{r}}{|f_0|}}$ are well separated.  The latter is proportional to the gravest baroclinic (internal) Rossby radius of deformation in a model with arbitrary stratification.  Since we are assuming a reduced-gravity setting, $R$ and $L$ can be approximately identified with the first and second internal deformation radii, respectively, of the arbitrarily stratified model extending throughout the entire water column. The noted scale separation allows one to distinguish long perturbations, with wavenumbers $\kappa = O(1)$ as $S\downarrow 0$, from short perturbations, with wavenumbers $L|\mathbf k| = \sqrt{2S}\kappa = O(1)$ as $S\downarrow 0$. We find that \eqref{eq:disprel} reduces to
\begin{equation}
    \lambda = \frac{-\nu(0) \pm \sqrt{\nu(0)^2 - 4\alpha\left(1 - \tfrac{2}{3}\mu\right)\left(\kappa^2 + 1\right)}}{2 \left(\kappa^2 + 1\right)},
    \label{eq:disprel-long}
\end{equation}
for long perturbations, and $\lambda = 0$, for short perturbations.  The limit $\kappa\uparrow \infty$ of \eqref{eq:disprel-long} gives $\lambda = 0$, i.e., $c = \bar U$.  As expected, it coincides with the short-perturbation phase speed.  This is real for all wavenumbers, which can be anticipated to be consequential for direct, fully nonlinear simulations.  It seems reasonable to think that wave activity in such simulations will fall off at sufficiently large wavenumbers (short wavelengths), which, according to the spectral (i.e., linear) analysis, are not growing.  

\begin{figure}
    \centering
    \raisebox{.5cm}{\includegraphics[width=.475\textwidth]{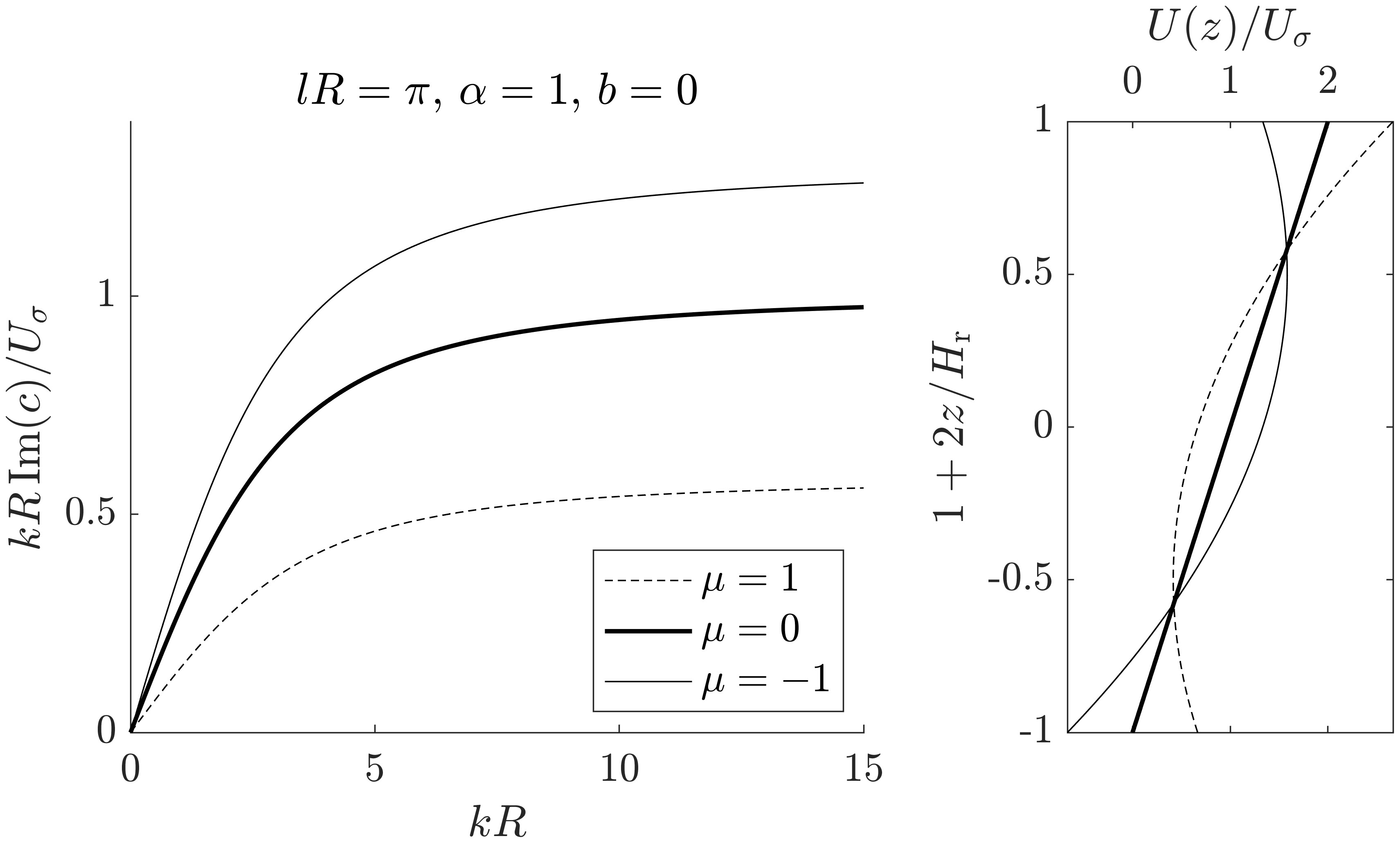}}\quad
    \includegraphics[width=.475\textwidth]{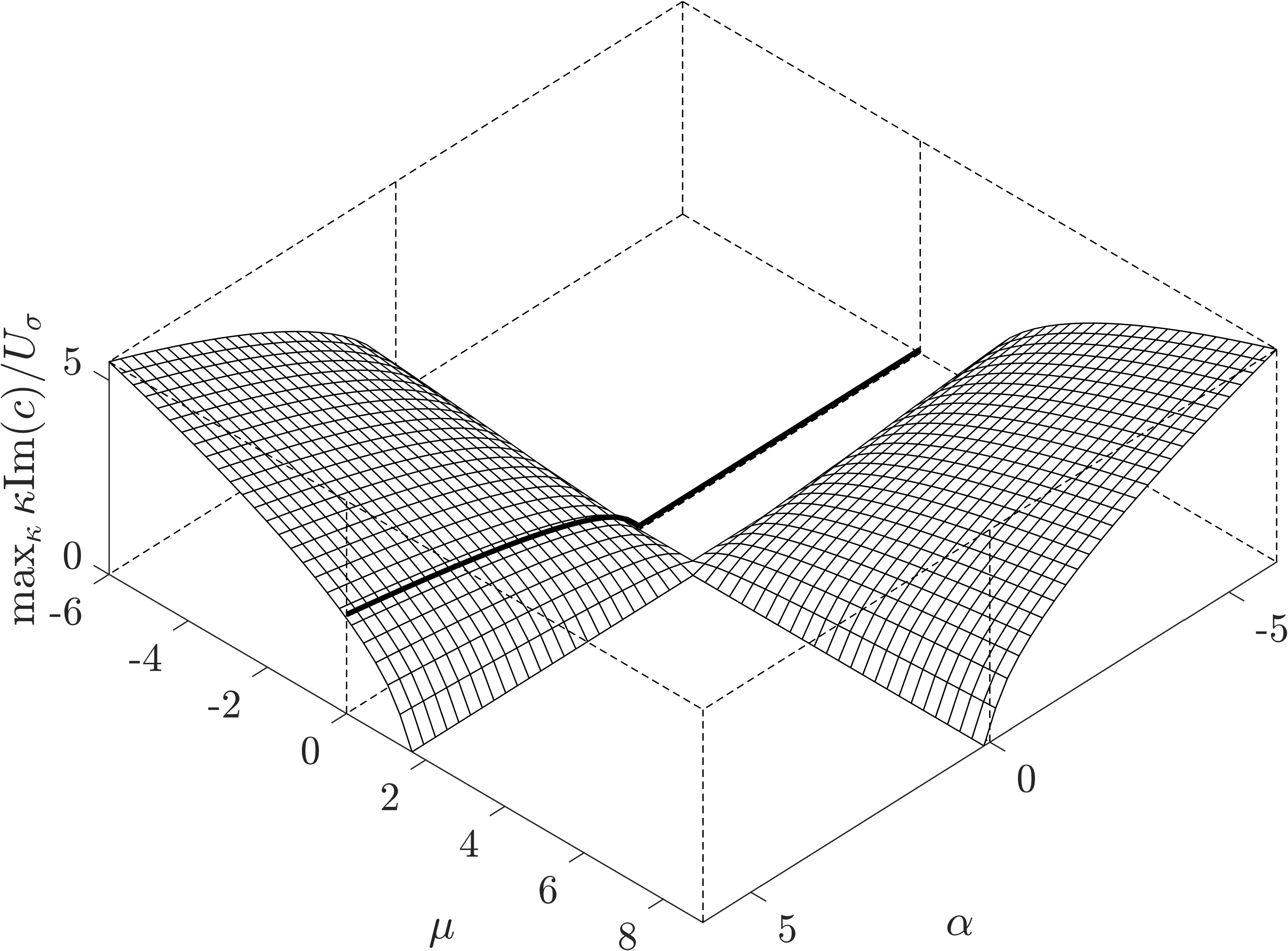}
    \caption{(right) Growth rate as a function of zonal wavenumber in the limit of weak stratification  ($S\downarrow 0)$ with basic state parameters $\alpha$ and $b$ as indicated for three values of $\mu$.  The IL$^0$QG result corresponds to the $\mu = 0$ curve as the velocity in that model can only include linear shear (implicitly, by the thermal-wind balance).  (middle) Zonal velocity (implicit) vertical profiles leading to the growth rates in the left panel. (left) Upper bound on the growth rate in $(\alpha,\mu)$-space for weak stratification. The thick line is the corresponding result for the IL$^0$Q.}
    \label{fig:omega}
\end{figure}

Concerning the growth rate of the perturbation, given by $\varpi := k\operatorname{Im}(c)$ for the smallest meridional wavenumber, viz., $l = \smash{\frac{\pi}{W}}$, it is observed that this saturates with zonal wavenumber, both when the stratification is finite and for long perturbations in the weak-stratification limit.  The left panel of Figure \ref{fig:omega} shows $\varpi(k)$ when $W = R$ for selected values of basic state parameters outside the spectrally stable set \eqref{eq:stab-spec-set} in the limit of weak stratification and with vanishing Charney number.  Note that $\varpi$ is larger (resp., smaller) when the basic state zonal velocity vertical profile (Figure \ref{fig:omega}, middle panel) curves eastward (resp., westward) than when it does not curve, as in the IL$^0$QG.  An upper-bound measure of the saturating value of $\varpi$ is given, when $S$ is finite, by
\begin{equation}
    \max_\kappa\frac{\kappa\operatorname{Im}(c)}{U_\sigma} = \sqrt{\frac{\alpha(1 - \tfrac{2}{3}\mu)}{1 - \tfrac{1}{3}S}},
    \label{eq:omega}
\end{equation}
where parameters $(\alpha,\mu)$ belong in the spectrally unstable region, i.e., the complement of set \eqref{eq:stab-spec-set}. The long-perturbation limit of \eqref{eq:omega} is $\smash{\sqrt{\alpha(1-\frac{2}{3}\mu})}$, which we plot in right panel of Figure \ref{fig:omega}.  An important final observation, with consequences for direct simulations, is that the growth rate for short perturbations vanishes for all wavenumbers, reinforcing the comments above that development of wave activity should be stopped in the IL$^{(0,1)}$QG at sufficiently short wavelengths.

\subsection{Formal stability}\label{sec:formal}

We now turn to investigate the formal stability of basic satate \eqref{eq:BS}, for which we have at our disposal several conservation laws, namely, energy ($\mathcal E$), Casimirs ($\mathcal C_{a,F}$), zonal momentum ($\mathcal M$), and the non-Casimir, non-explicit-symmetry-related conservation laws ($\mathcal I_F$) (cf.\ Section \ref{sec:integrals}) to construct an appropriate general integral of motion $\mathcal H$ such that $\delta\mathcal H = 0$ at a basic state. The use of energy and Casimirs to construct $\mathcal H$ may be traced from the work of Arnold in the 1960's \cite[cf.][which is a collection of papers that appeared in those years]{Arnold-14} back over a decade to those of \cite{Fjortoft-50} and \cite{Kruskal-Oberman-58}. The use of momentum in systems with $x$-translational symmetry was pioneered by Ripa \cite{Ripa-JFM-83}.  Here we explore the additional use of the weak Casimir conservation law $\mathcal I_F$.  We these comments in mind, we will, following tradition, refer to the use of all available integrals of motion to derive a-priori stability statements as \emph{Arnold's method}.

\begin{remark}\label{rem:formal}
    A few remarks are in order:  
    \begin{enumerate}
    \item The equilibria $\Phi$ of a Hamiltonian system $\partial_t\varphi = \{\varphi,\mathcal E\}$ are no longer unrestricted conditional extrema of the Hamiltonian ($\mathcal E$); by contrast, $\partial_t\Phi = \{\Phi,\mathcal E[\Phi]\}[\Phi] = 0$ implies that $\smash{\frac{\delta}{\delta\varphi}}(\mathcal E + \mathcal C)[\Phi] = 0$ for some Casimir $\mathcal C$. Level sets of constants of the motion define certain ``leaves'' on the Poisson manifold, $(P,\{\,,\hspace{.01cm}\})$. If these constants are the Casimirs of $\{\,,\hspace{.01cm}\}$, they form the ``symplectic leaves'' of $P$ \textup{\cite[e.g.,][]{Holm-etal-85,Morrison-98}}. Thus equilibria (basic states) are critical ``points'' of $\mathcal E$ restricted to such leaves.
    
    \item When $\mathcal H$ is chosen to be a linear combination $\mathcal E + \mathcal C - U\mathcal M$ where $U = \const$ and for some $\mathcal C$ such that $\delta\mathcal H = 0$ at $\Phi$, it turns out that $\delta^2\mathcal H$ is a Hamiltonian for the linearized dynamics about $\Phi$ as seen from a reference frame steadily translating in the $x$-direction with speed $U$. Namely, to the lowest order in $\delta\varphi$, it follows that $(\partial_t+U\partial_x)\delta\varphi = \{\delta\varphi,\delta^2\mathcal H\}[\Phi]$ where the bracket here is that defined in \eqref{eq:LP} but with a constant argument.  This is a genuine Lie--Poisson bracket in that all required properties are satisfied, most importantly the Jacobi identity, which is readily verified \textup{\cite[cf.,][for a discussion]{Morrison-Eliezer-86}}.
    
    \item With $\mathcal H$ as constructed, the integral $\delta^2\mathcal H$, or more generally $\Delta\mathcal H$, is referred to as \textbf{pseudoenergy--momentum}, following notation introduced in \textup{\cite{Ripa-JFM-83}}.  In the case with no zonal symmetry, the \emph{pseudoenergy} is referred to as a ``free energy'' in \textup{\cite{Morrison-98}} to mean that it is the energy accessible to the system upon perturbation away from equilibrium given the Casimir.  It should not be confused with the free energy defined in Section \ref{sec:waves}.
    \end{enumerate}
\end{remark}

We begin by considering
\begin{equation}
    \mathcal H_{\bar U} := \mathcal E - \bar U\mathcal M + \mathcal C_{0,F}
    \label{eq:H-C}
\end{equation} 
with
\begin{equation}
    F(\psi_\sigma,\psi_{\sigma^2}) = - \tfrac{1}{2} \alpha R_S^{-2}\left(\psi_\sigma^2 - \tfrac{2}{3}\mu^{-1} \psi_{\sigma^2}^2\right).
\end{equation}
The first variation of \eqref{eq:H-C} identically vanishes at the basic state \eqref{eq:BS}.  Its second variation
\begin{equation}
    \delta^2\mathcal H_{\bar U} = \mathcal E[\delta\bar\psi] - \tfrac{1}{2}R_S^{-2}\alpha\int 
    \delta\psi_\sigma^2 - \tfrac{2}{3}\mu^{-1} \delta\psi_{\sigma^2}^2
    \,d^2x,
    \label{eq:d2HU}
\end{equation}
which is positive definite provided that
\begin{equation}
    \alpha < 0,\quad \mu < 0.
    \label{eq:d2HUpos}
\end{equation}
This corresponds to the hatched region of the $(\alpha,\mu)$-plane of Figure \ref{fig:alphamu}.  That is, only a subset of the spectrally stable states are possible to be proved formally stable.  But such a subset however is stable with respect to perturbations not only of arbitrary structure.  Rather, they are stable with respect to finite-amplitude perturbations since
\begin{equation}
    \delta^2\mathcal H_{\bar U} = \Delta \mathcal H_{\bar U}. 
    \label{eq:d2HUequivDHU}
\end{equation}
That is, the pseudoenergy--momentum is a quadratic integral of motion.  This is a Hamiltonian for the exact dynamics of the IL$^{(0,1)}$ as seen by an observer zonally moving with speed $\bar U$ (extending Remark \ref{rem:formal}(ii), above). We will return to discussing the consequences of \eqref{eq:d2HUequivDHU} in Section \ref{sec:lyapunov}, below.

Let us consider now the use of the weak Casimir integrals $\mathcal I_F$, defined in \eqref{eq:I}, in Arnold's method. It turns out that the general integral of motion
\begin{equation}
    \mathcal H_{\tfrac{2}{3}\alpha U_\sigma} := \mathcal E - \tfrac{2}{3}\alpha U_\sigma\mathcal M + \mathcal C_{0,F} + \mathcal I_G
    \label{eq:HI}
\end{equation}
for
\begin{align}
     F(\psi_\sigma,\psi_{\sigma^2}) &=  \tfrac{1}{2}R_S^{-2}\alpha(\nu+1)\left(\tfrac{2}{3}\nu^{-1}\psi_{\sigma^2}^2 - \psi_\sigma^2\right),\\
     G\big(\psi_\sigma - \tfrac{2}{3}\psi_{\sigma^2}\big) &= \alpha \big(\psi_\sigma - \tfrac{2}{3}\psi_{\sigma^2}\big),
\end{align}
has a vanishing first variation at the basic state \eqref{eq:BS}. Its second variation,
\begin{align}
    \delta^2\mathcal H_{\tfrac{2}{3}\alpha U_\sigma}
     = {} & \mathcal E[\delta\bar\psi]
     +
     \alpha\int 
     \left(\delta\bar\xi + \tfrac{1}{2}\delta\psi_\sigma\right)^2 - 
     \left(\delta\bar\xi + \tfrac{1}{3}\delta\psi_{\sigma^2}\right)^2\nonumber\\
     & - \tfrac{1}{2}\left(R_S^{-2}(\nu+1) + \tfrac{1}{2}\right) \delta\psi_\sigma^2 + 
     \tfrac{1}{3}\left(R_S^{-2}\mu^{-1}(\nu+1) - \tfrac{1}{3}\right) \delta\psi_{\sigma^2}^2
     \,d^2x.
     \label{eq:d2HI}
\end{align}
Higher-order variations of \eqref{eq:HI} vanish, so \eqref{eq:d2HI} is an exact, fully nonlinear conservation law.  However, it is at most positive semidefinite and provided that $\alpha = 0$. Thus the conservation laws \eqref{eq:I} are not useful to make an \emph{a priori} assessment about the stability of the basic state \eqref{eq:BS}, at least using Arnold's method.

\subsection{Lyapunov stability}\label{sec:lyapunov}

Lyapunov stability of the basic states defined by \eqref{eq:d2HUpos} can be established as follows.  Consider the subset of stable basic states with
\begin{equation}
    \alpha < 0,\quad
    0 > \mu \ge -1. 
    \label{eq:d2HUpos-1}
\end{equation}
Let $\eta_1$ and $\lambda_1$ be two positive constants such that
\begin{equation}
    0 < \eta_1 < -\alpha
    \label{eq:eta-1}
\end{equation}
and
\begin{equation}
    -\alpha - \eta_1 \le \lambda_1 \le -\alpha + \eta_1,
    \label{eq:lambda-1}
\end{equation}
respectively. Recalling that \eqref{eq:d2HUequivDHU} holds, the convexity estimate follows:
\begin{equation}
   \frac{-\alpha-\eta_1}{\lambda_1} \|(\delta\bar\xi,\delta\psi_\sigma,\delta\psi_{\sigma^2})\|_2(\lambda_1)^2 \le \Delta \mathcal H_{\bar U} \le \frac{-\alpha+\eta_1}{\lambda_1} \|(\delta\bar\xi,\delta\psi_\sigma,\delta\psi_{\sigma^2})\|_2(\lambda_1)^2,
   \label{eq:conv-1}
\end{equation}
where  
\begin{equation}
    \|(\delta\bar\xi,\delta\psi_\sigma,\delta\psi_{\sigma^2})\|_2(\lambda_1)^2 := \mathcal E[\delta\bar\psi] + \tfrac{1}{2}R_S^{-2}\lambda_1\int \delta\psi_\sigma^2 + \tfrac{2}{3} \delta\psi_{\sigma^2}^2
    \,d^2x,
    \label{eq:L2-1}
\end{equation}
measures, in an $L^2$ sense, the squared distance to the subfamily of basic states \eqref{eq:BS} defined by \eqref{eq:d2HUpos-1} in the infinite-dimensional phase space of the IL$^{(0,1)}$QG model equation \eqref{eq:IL01} with coordinates $(\bar\xi,\psi_\sigma,\psi_{\sigma^2})$. This establishes Lyapunov stability for them.  Indeed, since $\Delta \mathcal H_{\bar U}$ is invariant and convex, it follows that the distance to such states at any time is bounded from above by a multiple of the initial distance.

Consider now the subfamily of basic states \eqref{eq:BS} defined by
\begin{equation}
    \alpha < 0,\quad
    \mu < -1. 
    \label{eq:d2HUpos-2}
\end{equation}
Letting $\eta_2$ and $\lambda_2$ be positive constants such that
\begin{equation}
    0 < \eta_2 < \frac{\alpha}{\mu}
    \label{eq:eta-2}
\end{equation}
and
\begin{equation}    
    \frac{\alpha}{\mu} - \eta_2 \le \lambda_2 \le \frac{\alpha}{\mu} + \eta_2,
    \label{eq:lambda-2}
\end{equation}
one finds the convexity estimate:
\begin{equation}
   \frac{\alpha-\mu\eta_2}{\mu\lambda_2} \|(\delta\bar\xi,\delta\psi_\sigma,\delta\psi_{\sigma^2})\|_2(\lambda_2)^2 \le \Delta \mathcal H_{\bar U} \le \frac{\alpha+\mu\eta_1}{\mu\lambda_2} \|(\delta\bar\xi,\delta\psi_\sigma,\delta\psi_{\sigma^2})\|_2(\lambda_2)^2.
   \label{eq:conv-2}
\end{equation}
This establishes Lyapunov stability for basic states \eqref{eq:BS} with \eqref{eq:d2HUpos-2}. Overall the above convexity estimates establish Lyapunov stability for \eqref{eq:BS} over the whole range of stable parameters \eqref{eq:d2HUpos}.

\section{Nonlinear saturation of unstable baroclinic zonal flow}\label{sec:bounds}

In this section we seek to a-prior constraining the growth of perturbations to basic states \eqref{eq:BS} that have been found to be spectrally unstable, namely, those that violate \eqref{eq:stab-spec}, by making use of the above Lyapunov stability result(s) for basic states \eqref{eq:BS} satisfying \eqref{eq:d2HUpos}.  This can be done using \emph{Shepherd's method}, introduced in \cite{Shepherd-88a}, which we outline below.  Denote stable (resp., unstable) state quantities with a superscript $\text{S}$ (resp., $\text{U}$).  Grouping $(\bar\xi,\psi_\sigma,\psi_{\sigma^2})$ into $\varphi$ we have
\begin{equation}
    \|\varphi - \Phi^\text{U}\|_2(\lambda_1)\le \sqrt{\frac{2\alpha^\text{S}}{\alpha^\text{S}+\eta_1}} \|\Phi^\text{S} - \Phi^\text{U}\|_2(\lambda_1) =: \mathcal B_1
    \label{eq:B1}
\end{equation}
for $\Phi^\text{S}$ satisfying \eqref{eq:d2HUpos-1} and
\begin{equation}
    \|\varphi - \Phi^\text{U}\|_2(\lambda_2)\le \sqrt{\frac{2\alpha^\text{S}}{\alpha^\text{S}-\mu^\text{S}\eta_2}} \|\Phi^\text{S} - \Phi^\text{U}\|_2(\lambda_2) =: \mathcal B_2
    \label{eq:B2}
\end{equation}
for $\Phi^\text{S}$ satisfying \eqref{eq:d2HUpos-2}. The inequalities above follow as a result of successively:
\begin{enumerate}
    \item applying the triangular inequality $\|\varphi - \Phi^\text{U}\|_2(\lambda_i) \le \|\varphi - \Phi^\text{S}\|_2(\lambda_i) + \|\Phi^\text{S} - \Phi^\text{U}\|_2(\lambda_i)$, $i=1,2$;
    \item using convexity estimate \eqref{eq:conv-1} for $i=1$ and \eqref{eq:conv-2} for $i=2$; and 
    \item \emph{assuming that $\varphi \approx \Phi^\text{U}$ initially at $t = 0$}.
\end{enumerate} 

Set $\lambda_1$ and $\lambda_2$ to their minimum values in their corresponding admissible ranges, given by \eqref{eq:lambda-1} and \eqref{eq:lambda-2}, respectively. Let
\begin{equation}
    0<r_i<1,\quad i = 1,2,
\end{equation}
be constants such that
\begin{equation}
    \eta_1 = - r_1\alpha^\text{S},\quad 
    \eta_2 = r_2\frac{\alpha^\text{S}}{\mu^\text{S}}
\end{equation}
satisfy \eqref{eq:eta-1} and \eqref{eq:eta-2}, respectively.  The following tighter bounds result upon minimizing over $r_i$, $i=1,2$:
\begin{equation}
   \mathcal B_1 \ge \hat{\mathcal B_1} := R_S^{-1}\sqrt{\tfrac{1}{6}LW^3}\cdot\sqrt{(\bar U^\text{S} -\bar U^\text{U})^2 - \frac{\bar U^\text{S}}{U_\sigma^\text{S}}\left((U_\sigma^\text{S} - U_\sigma^\text{U})^2 + \tfrac{2}{3}(U_{\sigma^2}^\text{S} - U_{\sigma^2}^\text{U})^2\right)}
   \label{eq:hatB1}
\end{equation}
and
\begin{equation}
   \mathcal B_2 \ge \hat{\mathcal B_2} := R_S^{-1}\sqrt{\tfrac{1}{6}LW^3}\cdot\sqrt{(\bar U^\text{S} -\bar U^\text{U})^2 + \frac{\bar U^\text{S}}{U_{\sigma^2}^\text{S}}\left((U_\sigma^\text{S} - U_\sigma^\text{U})^2 + \tfrac{2}{3}(U_{\sigma^2}^\text{S} - U_{\sigma^2}^\text{U})^2\right)}.
   \label{eq:hatB2}
\end{equation}

\change[RA]{In the top-left panel of \mbox{Fig.\ \ref{fig:bounds}} we plot $\hat{\mathcal B}_1$ in $(\alpha^\text{U},\mu^\text{U})$-space for $U_\sigma^\text{U} = 2f_0R$ as computed using $\smash{\bar U^\text{S}} = f_0R$, $U_\sigma^\text{S} = -f_0R$, and $\smash{U_{\sigma^2}^\text{S}} = \smash{\frac{1}{2}}f_0R$. The top-right panel of  \mbox{Fig.\ \ref{fig:bounds}} likewise shows $\smash{\hat{\mathcal B}_2}$ as obtained with  $\bar U^\text{S} = f_0R$, $\smash{U_\sigma^\text{S}} = -f_0R$, and $\smash{U_{\sigma^2}^\text{S}} = \smash{\frac{3}{2}}f_0R$.  Note that the bounds do not drop down to zero at the boundary of instability.  They do along $\alpha^\text{U} = 0$ as $\smash{\bar U^\text{S} \downarrow 0}$ (\mbox{Fig.\ \ref{fig:bounds}}, bottom panels). However, they do not along $\smash{\mu^\text{U} = \frac{3}{2}}$ because the Lyapunov-stable basic states only span a subset of the complement of the spectrally stable region in the $(\alpha,\mu)$-space (\mbox{cf.\ Fig.\ \ref{fig:alphamu}}).  Optimal bounds may be obtained by minimizing $\smash{\hat{\mathcal B}_1}$ and $\smash{\hat{\mathcal B}_2}$ over stable state parameters. The important point to highlight however is the existence of a priori upper bounds which prevent perturbations to spectrally unstable basic states, namely, equilibria \mbox{\eqref{eq:BS}} violating \mbox{\eqref{eq:stab-spec}}, from undergoing an ultraviolet explosion.}{In the upper-left panel of \mbox{Fig.\ \ref{fig:bounds}}, we illustrate $\hat{\mathcal B}_1$ within the $(\alpha^\text{U},\mu^\text{U})$-space for $U_\sigma^\text{U} = 2f_0R$, calculated utilizing $\smash{\bar U^\text{S}} = f_0R$, $U_\sigma^\text{S} = -f_0R$, and $\smash{U_{\sigma^2}^\text{S}} = \smash{\frac{1}{2}}f_0R$. Similarly, the upper-right panel of \mbox{Fig.\ \ref{fig:bounds}} depicts $\smash{\hat{\mathcal B}_2}$, derived using $\bar U^\text{S} = f_0R$, $\smash{U_\sigma^\text{S}} = -f_0R$, and $\smash{U_{\sigma^2}^\text{S}} = \smash{\frac{3}{2}}f_0R$. It is important to note that at the boundary of instability, these bounds do not reach zero. They indeed approach zero along $\alpha^\text{U} = 0$ as $\smash{\bar U^\text{S} \downarrow 0}$ (\mbox{Fig.\ \ref{fig:bounds}}, bottom panels), but not along $\smash{\mu^\text{U} = \frac{3}{2}}$, since the Lyapunov-stable basic states only cover a portion of the complement to the spectrally stable zone in the $(\alpha,\mu)$-space (\mbox{cf.\ Fig.\ \ref{fig:alphamu}}). Optimal bounds can be found by minimizing $\smash{\hat{\mathcal B}_1}$ and $\smash{\hat{\mathcal B}_2}$ over stable state parameters. The key point is the presence of a-priori upper bounds that restrict disturbances to spectrally unstable basic states, specifically equilibria \mbox{\eqref{eq:BS}} that violate \mbox{\eqref{eq:stab-spec}}, from experiencing an ultraviolet blowup.}  

\change[RA]{Finally, as shown in \mbox{\cite{Beron-21-POFa}}, Shepherd's method can be used to bound the growth of perturbations on unstable states which may belong to a class different than that of those states for which Lyapunov stability can be established. As a consequence, the above conclusion can be extended to perturbations on the vacuum state, characterized by positive-semidefinite free energy (\mbox{cf.\ Section \ref{sec:waves}}), which constitutes a special form of pseudo-energy--momentum.  Indeed, their nonlinear growth is constrained by the existence of Lyapunov-stable basic states as is the nonlinear growth of perturbations to unstable basic states, which have sign-indefinite pseudo-energy--momentum.  Bounds on the spontaneous growth of perturbations to the vacuum state, should it be realized, are given by either \mbox{\eqref{eq:hatB1}} and \mbox{\eqref{eq:hatB2}} with $\bar U^\text{U} = U_\sigma^\text{U} = U_{\sigma^2}^\text{U} = 0$.}{Finally, following the findings presented in \mbox{\cite{Beron-21-POFa}}, Shepherd's method serves as a means to constrain the growth of perturbations on unstable states that may pertain to a distinct class from those for which Lyapunov stability is established. Consequently, this inference extends to perturbed vacuum states, which are characterized by positive-semidefinite free energy (\mbox{cf.\ Section \ref{sec:waves}}), representing a variant of pseudo-energy--momentum. Essentially, their nonlinear growth is restricted by the presence of Lyapunov-stable basic states, similar to the nonlinear growth of perturbations to unstable basic states with sign-indefinite pseudo-energy--momentum. The constraints on the potential spontaneous growth of perturbations to the vacuum state, if it occurs, are given by either \mbox{\eqref{eq:hatB1}} or \mbox{\eqref{eq:hatB2}} with $\bar U^\text{U} = U_\sigma^\text{U} = U_{\sigma^2}^\text{U} = 0$.}

\begin{figure}
    \centering
    \includegraphics[width=.75\textwidth]{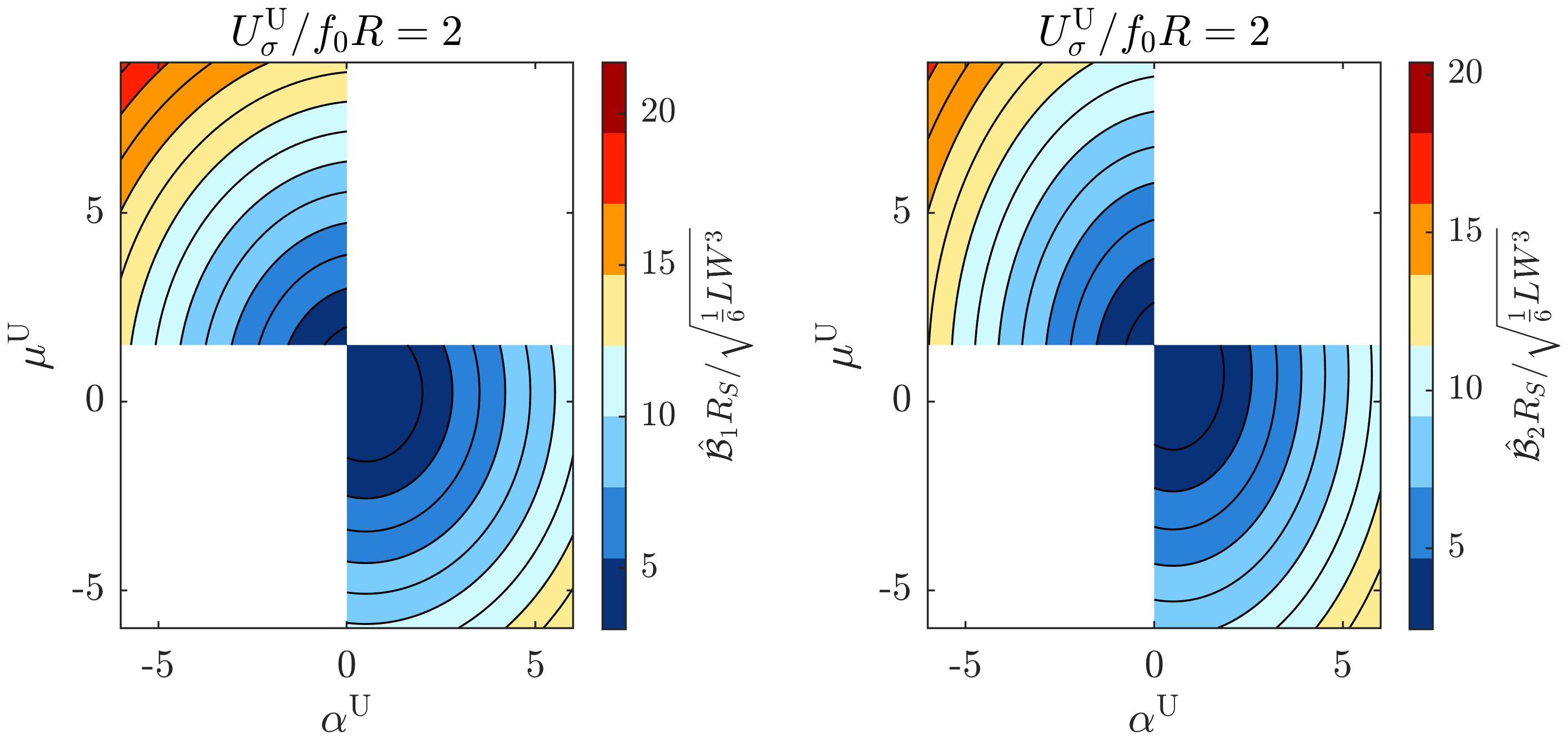}\\\vspace{.25cm}
    \includegraphics[width=.75\textwidth]{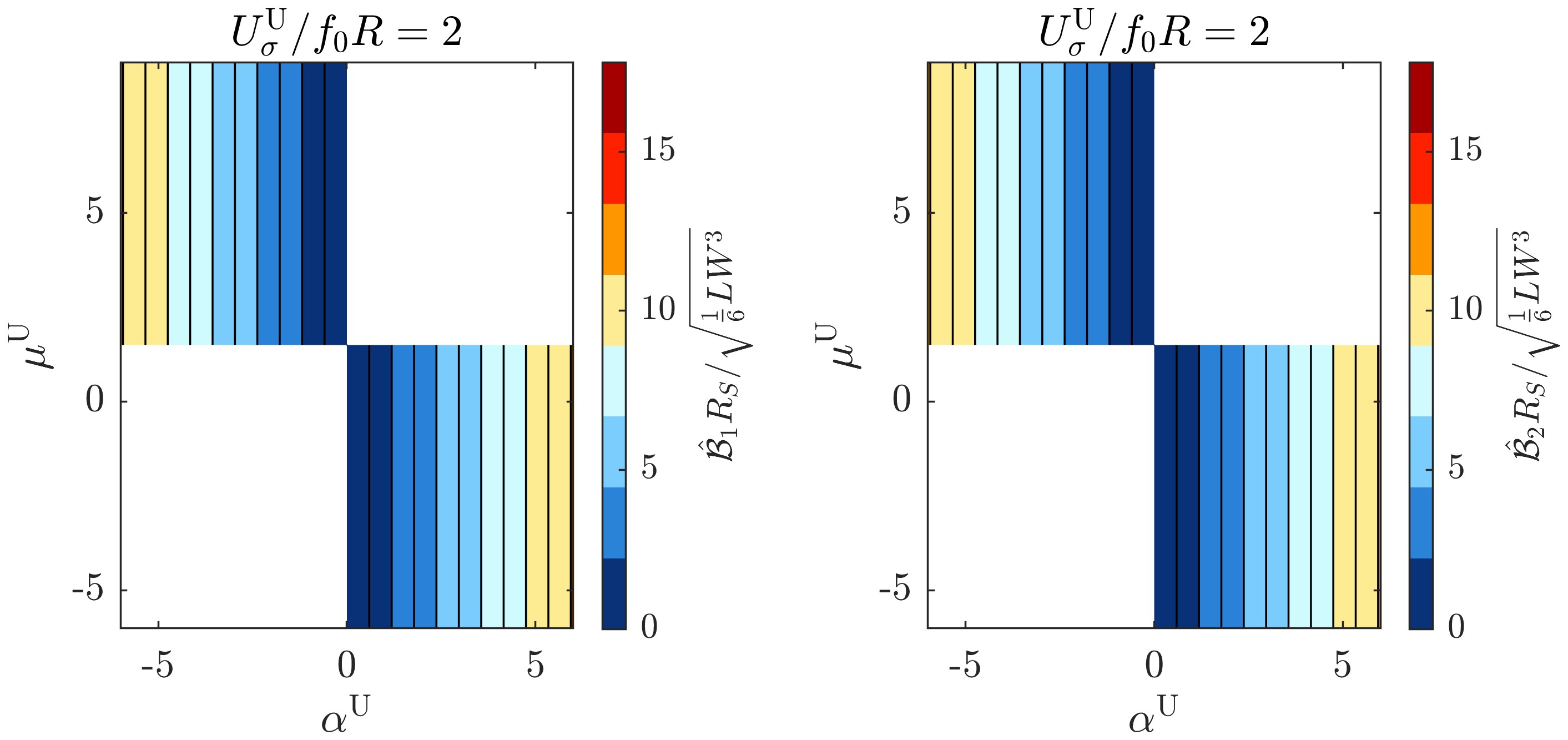}
    \caption{Normalized a-priori nonlinear upper bounds on the growth of perturbations to spectrally unstable basic states as computed using Lyapunov-stable states with $\smash{\alpha^\text{S}} < 0$ and $0 > \smash{\mu^\text{S}} \ge -1$ (top left) and $\smash{\alpha^\text{S}} < 0$ and $\smash{\mu^\text{S}} < -1$ (top right). The bottom panels show the same as in the top panels in the limit $\smash{\bar U^\text{S}} \downarrow 0$.}
    \label{fig:bounds}
\end{figure}

\section{Discussion}\label{sec:discussion}

The study presented here was motivated, in part, by direct numerical simulations revealing a tendency of the IL$^0$QG to quickly develop visually more intense small-scale vortex rolls than the IL$^{(0,1)}$QG \cite{Beron-21-POFb}.  Such small-scale circulation can be fairly referred to as submsoscales circulations as their scales fall well below $R$, the Rossby deformation radius in the IL$^0$QG and IL$^{(0,1)}$QG models.  As noted above, being $R$ an equivalent barotropic deformation radius, it approximately corresponds to the gravest baroclinic (i.e., internal) deformation scale in the continuously stratified model defined over the entire water column. The results from the various analyses carried out above do not appear to provide reason to expect that the numerical observation of \cite{Beron-21-POFb} holds in general, as we summarize next:

\begin{enumerate}
    \item  The production (or destruction) of Kelvin--Noether circulation ($\mathcal K$) along material loops was identified in \cite{Holm-etal-21} with the development of \change[]{sub-deformation}{subdefromation} scale or submesoscale wave activity in the IL$^0$QG.  It turns out that $\mathcal K$ is equally produced (or destroyed) in both the IL$^0$QG and IL$^{(0,1)}$QG.  The rate of change of $\mathcal K$ in the former is given by \eqref{eq:dotK} with the $-\smash{\frac{2}{3}\psi_{\sigma^2}}$ term omitted.  While by static stability one has that $\smash{-\frac{2}{3}\psi_{\sigma^2}} < \smash{\frac{1}{6}f_0R^2S}$, one cannot anticipate the contribution of the gradient of this term to the integral in \eqref{eq:dotK}.  In other words, one cannot anticipate the role of stratification in the production (or destruction) of $\mathcal K$.  
    
    \item Both the IL$^0$QG and IL$^{(0,1)}$QG models exhibit a neutral mode, termed a ``force compensating mode'' by \cite{Ripa-JGR-96}, where potential vorticity changes arising from buoyancy fluctuations do not alter the fluid velocity (cf.\ Section \ref{sec:waves}).  That is, associated with this mode is a vanishing free energy, which cannot constrain a possible spontaneous growth of this mode.  Thus, both the IL$^0$QG and IL$^{(0,1)}$QG suffer from this property, which might equally play a role in the development of submesoscale circulations in both models. 
    
    \item The intensity of submesoscale motions is constrained by the existence of Lyapunov-stable states, both in the IL$^{(0,1)}$QG (cf.\ Section \ref{sec:bounds}) and the IL$^0$QG \cite{Beron-21-POFa, Beron-24-POFa}.  However, one cannot compare the size of the space available in each model for wave activity, more precisely, the nonlinear growth of the amplitude of perturbations on basic states \eqref{eq:BS} violating \eqref{eq:stab-spec}, the condition for spectral stability.  The reason is that this is measured using different ($L^2$) norms.  
    
    \item When considering stability itself, it was found that this predicts growth rates in the IL$^{(0,1)}$QG that can be smaller than, equal to, or larger than those in the IL$^0$QG  (cf., e.g., Fig.\ \ref{fig:omega}).  The only, potentially important, difference is that the growth rate vanishes in the IL$^{(0,1)}$QG for short perturbations.  Distinguishing short from large perturbations in the weak stratification limit is only possible in IL$^{(0,1)}$QG. \change[RA]{This speaks about stability in the IL$^{(0,1)}$QG in high-wavenumber end of the spectrum and ensuing potentially less intense submesoscale wave activity in the nonlinear regime.}{This suggests stability in the IL$^{(0,1)}$QG at the high-wavenumber end of the spectrum, potentially resulting in less intense submesoscale wave activity in the nonlinear regime.}  
    
    \item Yet, that Arnold's method fails to demonstrate formal, let alone Lyapunov, stability for all spectrally stable states (cf.\ Fig.\ \ref{fig:alphamu}) suggests that the IL$^{(0,1)}$QG may be more prone to instability than the IL$^0$QG, in which case all spectrally stable basic states (in a similar class) are provable Lyapunov stable \cite{Beron-21-POFa, Beron-24-POFa}.
\end{enumerate}   

\begin{figure}
    \centering
    \includegraphics[width=.4\textwidth]{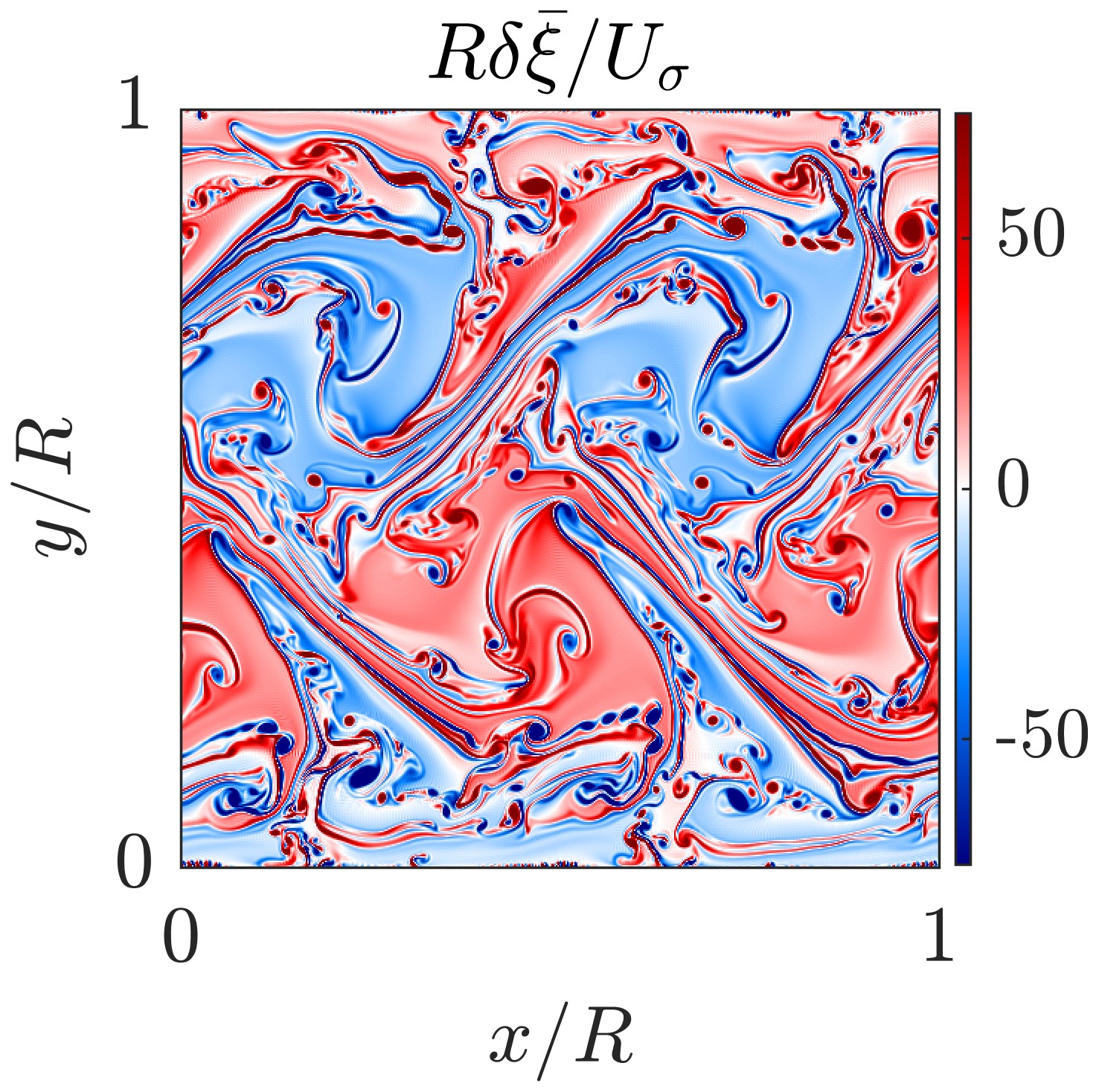}\,
    \includegraphics[width=.4\textwidth]{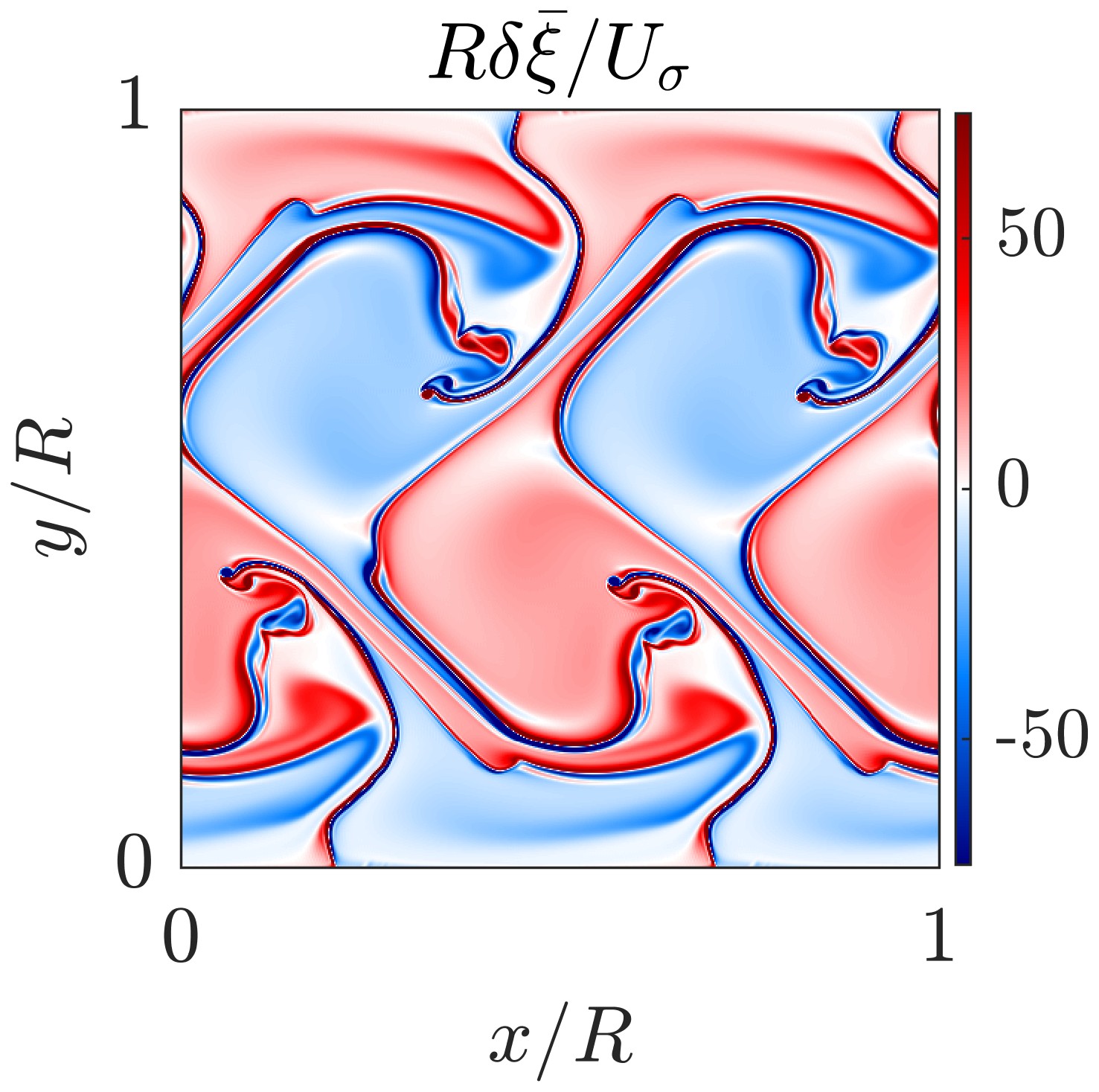}\\\vspace{.25cm}
    \includegraphics[width=.4\textwidth]{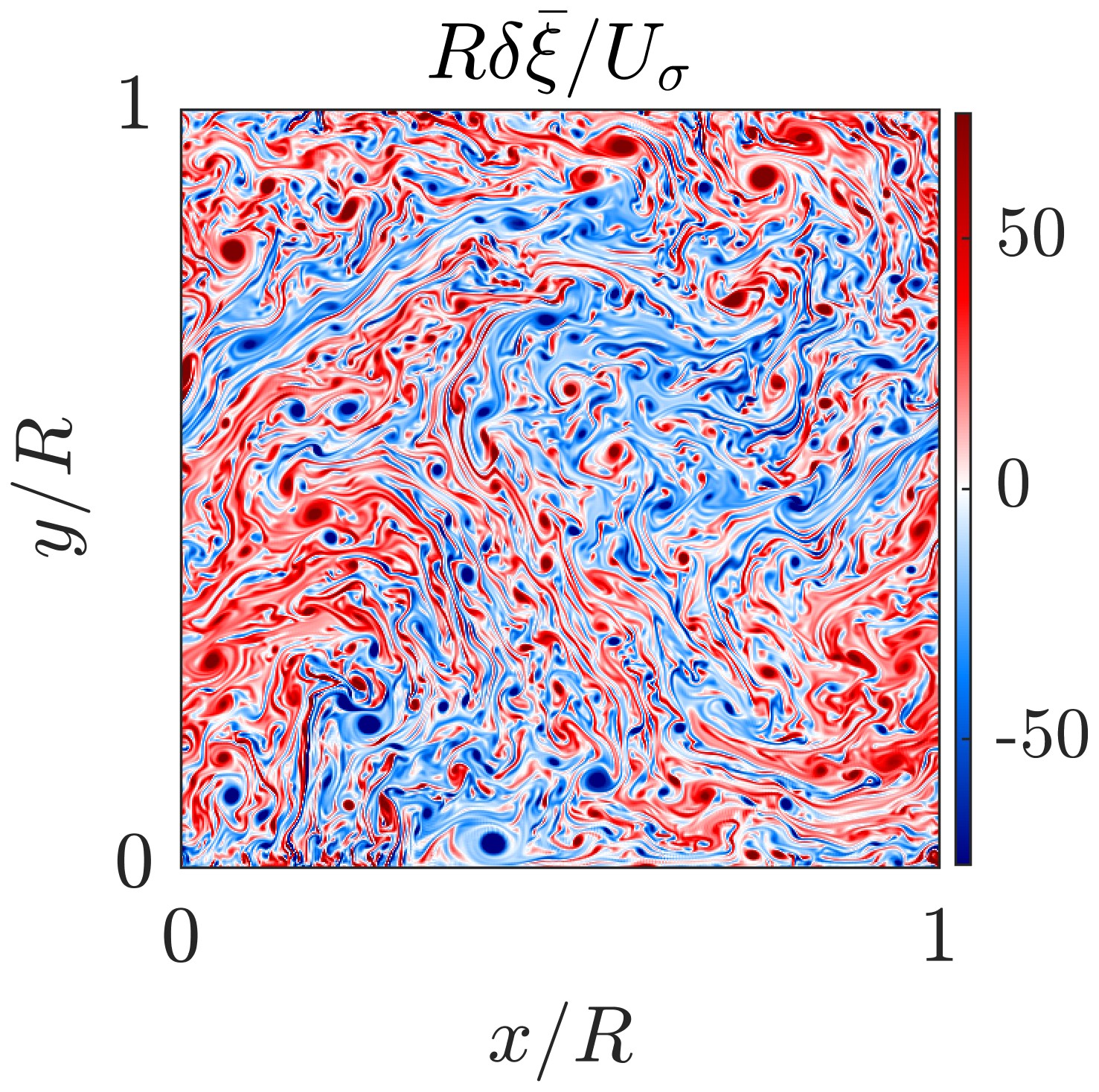}\,
    \includegraphics[width=.4\textwidth]{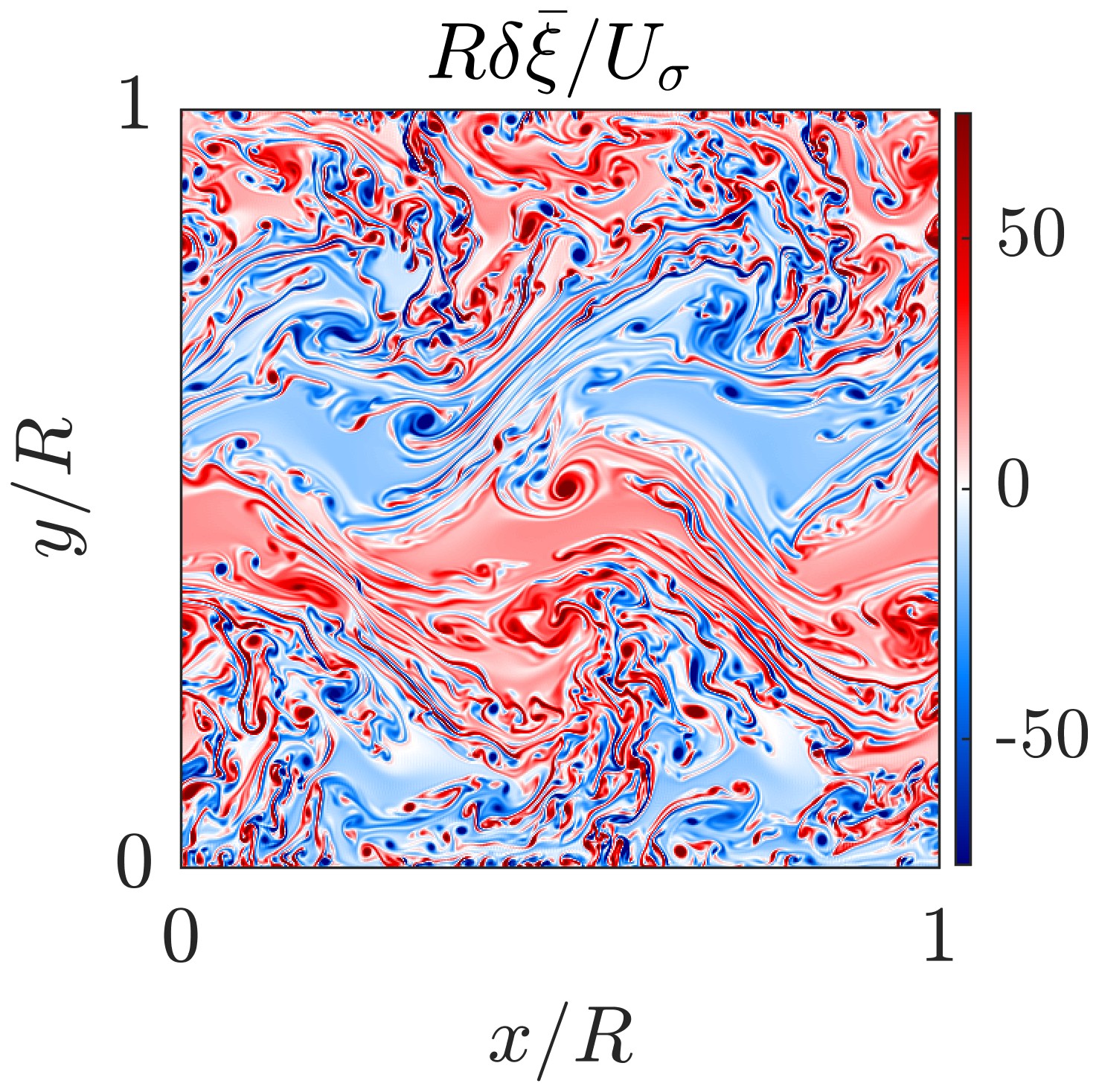}
    \caption{(left column) Snapshots at $U_\sigma t/R = 2$ (top) and 3 (bottom) of the nonlinear evolution of the potential vorticity perturbation to spectrally unstable basic state potential vorticity (\ref{eq:BS}d) with $\alpha = 3$, $\mu = 0$, $b = 0$, and $S = 0$. (right column) As in the left column, but based on the IL$^0$QG, for which $S = 0$ and $\mu = 0$ (the buoyancy field is vertically uniform in this model).}
    \label{fig:direct}
\end{figure}

We back up the above inferences with the results from direct numerical simulations (cf.\ Appendix for numerical details) of the IL$^{(0,1)}$ and IL$^0$QG.  These differ from those of \cite{Beron-21-POFb}, which included Hamiltoninan topographic forcing (cf.\ Section \ref{sec:topo}).  Concretely, we considered the full nonlinear evolution of perturbations to the spectrally unstable basic state \eqref{eq:BS} defined by $\alpha = 3$, $\mu = 0$, $b = 0$, and $S = 0$, which violates  \eqref{eq:stab-spec}. The initial perturbation was chosen to be a small-amplitude normal mode.  Taking the zonal channel domain period ($L$) and width ($W$) to be equal to the deformation radius ($R$), the initial perturbation reads 
\begin{equation}
    \frac{R\delta\bar\xi}{U_\sigma} =  -\frac{\delta\psi_\sigma}{RU_\sigma} = \frac{\delta\psi_{\sigma^2}}{RU_\sigma} = \frac{1}{2}\cos 4\pi \frac{x}{R}\sin \pi \frac{y}{R}.
\end{equation}
For wavenumber-(2,1) $\mathbf k = (k,l) = \smash{\big(\frac{4\pi}{R},\frac{\pi}{R}\big)}$ and the above basic state parameters, from dispersion relation \eqref{eq:disprel} we compute a growth rate of $\varpi \approx 1.7\smash{\frac{U_\sigma}{R}}$ for the initial perturbation.  The duration of the simulation was set to roughly $5\varpi^{-1}$, that is, five times the e-folding time of a growing normal mode.  Snapshots of perturbation potential vorticity, $\delta\bar\xi$, at $t = 2.5\varpi^{-1}$ and $5\varpi^{-1}$ are shown in the top- and bottom-left panel of Fig.\ \ref{fig:direct}, respectively.  Note the development of Kelvin--Helmoltz-like vortex rolls with scales much smaller than $R$.  Initially, a scale separation is very evident.  This evidence faints with time.  The right panels of Fig.\ \ref{fig:direct} show at corresponding times $\delta\bar\xi$ but in the IL$^0$QG.  This was initialized from the same basic state as the IL$^{(0,1)}$QG except that in the IL$^0$QG, $\smash{\psi_{\sigma^2} = 0}$, i.e., the buoyancy is vertically uniform, so $S = 0$ and $\mu = 0$.  For the parameters chosen and wavenumber-(2,1), the growth rate of the initial perturbation in the IL$^0$QG coincides with that in the IL$^{(0,1)}$QG.  Comparing the left and right columns of Fig.\ \ref{fig:direct}, it is clear that submesoscale wave activity in the IL$^{(0,1)}$QG can be more intense than, or at least as intense as, in the IL$^0$QG. 

\section{Concluding remarks}\label{sec:conclusions}

In this paper, we have investigated the properties of a thermal rotating shallow-water model with uniform stratification for subinertial upper-ocean dynamics, and carried out a stability analysis in the model. The model has an active layer, limited above by a rigid interface with the atmosphere, which freely floats atop an infinitely deep inert (abyssal) layer. The dynamics in the model is quasigeostrophic (QG).  Unlike the adiabatic reduced-gravity QG model, the thermal model allows buoyancy (temperature) to vary laterally (i.e., in the horizontal) and time.  This enables the incorporation of heat and freshwater fluxes across the ocean surface. By contrast with the standard thermal model, the model considered here includes uniform stratification, still maintaining the two-dimensional structure of the thermal and adiabatic (i.e., with a homogeneous density) models.  The model is a special case of a recently derived extended theory of thermal models with geometry \cite{Beron-21-POFb}.  The QG version of the extended family of models is referred to as IL$^{(0,n)}$QG.  Here IL stands for inhomogeneous layer. The first slot in the superscript of IL refers to the amount of vertical variation of the horizontal velocity with depth and the second one to the (whole) degree of a polynomial expansion of the buoyancy in the vertical coordinate.  The particular stratified thermal QG model considered here is the IL$^{(0,1)}$QG member of that family.  The standard thermal QG model \cite{Ripa-RMF-96, Warneford-Dellar-13, Holm-etal-21} is referred to as the IL$^0$QG, while the adiabatic QG model \cite{Pedlosky-87} to as the HLQG.

\subsection{Main findings}

The IL$^{(0,1)}$QG has unique solutions, it possess a Kelvin--Noether-like circulation theorem along material loops stating that creation (or annihilation) of circulation is due to the misalignment between the gradients of layer thickness and buoyancy, and is manifestly Hamiltonian with a Lie--Poisson bracket.  Geometrically, the bracket is a product for a realization Lie enveloping algebra on functionals in the dual of the Lie algebra of the Lie group obtained by extending the group of symplectic diffeomorphisms of the fluid domain, chosen here to be a periodic zonal channel of the $\beta$-plane,  by semidirect product with two copies of the space of smooth functions on that domain.  The induced algebra representation on the space of functions is given by the canonical Poisson bracket in $\mathbb R^2$.  This Hamiltonian structure is not spoiled by the addition of certain type of topographic forcing. The IL$^{(0,1)}$QG thus preserves energy and zonal momentum, respectively related to symmetry under time shits and zonal translations by Noether's theorem.  In addition to these conservation laws, the model has integrals of motion that form the kernel of the Lie--Poisson bracket, known as Casimirs.  Unlike the usual situation in Lie--Poisson Hamiltonian systems, the IL$^{(0,1)}$QG also supports a class of motion integrals which neither form the kernel of the bracket nor are related to any explicit symmetries of the system via Noether's theorem.  To the best of our knowledge, such integrals have so far never been reported.  The model sustains the usual Rossby waves and a neutral model, both riding on a reference state, i.e., one with no currents, a special type of it being the vacuum state, i.e., one with vanishing energy. Positive definiteness of a general integral of motion, called a free energy, quadratic on the deviation from the reference state prevents perturbations of arbitrary shape and size on such a state from spuriously growing.  An exception is the vacuum state, as perturbations on this state have positive-semidefinite free energy associate.  Spurious growth of such perturbations cannot be prevented using free energy conservation.

The stability analysis carried out in the IL$^{(0,1)}$QG was applied on a family of basic states representing baroclinic zonal jets with curvature in the velocity vertical profile.  The IL$^{(0,1)}$QG has velocity independent of depth.  However, by the thermal-wind balance the velocity is implicitly vertically sheared.  We found that a current in this family is spectrally stable, i.e., with respect to infinitesimally small normal-mode perturbations, when the gradients of vertically averaged (mean) and layer thickness have like signs and the ratio of mean buoyancy and buoyancy frequency gradients are larger than a fraction of the reference layer thickness and vice versa.  In the limit of very weak stratification, the growth rate of long perturbations, i.e., with wavelengths of the order of the equivalent barotropic Rossby deformation radius, are found to saturate. By contrast, perturbations with wavelengths of the order of the gravest Rossby radius have vanishing growth rate.  Only a subset of the spectrally stable states was possible to be shown stable with respect to finite-size perturbations of arbitrary structure using the integrals of motion by means of the application of Arnold's method. These integrals exclude the weak Casimir integrals, which form the kernel of the Lie–Poisson bracket for the potential vorticity evolution independent of the details of the buoyancy as this is advected under the flow. The role of these integrals in constraining stratified thermal flow remains to be understood.  The states shown stable using Arnold's method are of Lyapunov type, i.e., the instantaneous perturbation distance to such states as measured using an $L^2$ norm is bounded at all times by a multiple of the initial distance to them.  These Lyapunov-stable states were used to \emph{a priori} bound the nonlinear growth of perturbations to spectrally unstable states, thereby preventing evolution in the IL$^{(0,1)}$QG from undergoing an ultraviolet explosion.  This is extensible to perturbations to any unstable state, including the vacuum state, whose free energy can be vanishing.

Our findings do not support the generality of earlier numerical evidence suggesting that suppression of sub-deformation scale (i.e., submesoscale) wave activity is a result of the inclusion of stratification in thermal shallow-water theory.  We backed up the latter with results of direct, fully nonlinear simulations of the IL$^{(0,1)}$QG and IL$^0$QG, which did not include topographic forcing as previously considered.

\subsection{Outlook}

Left for future investigation is uncovering the origins of the non-Casimir integrals of motion in the IL$^{(0,1)}$QG model. Preliminary investigations utilizing the primitive-equation parent model \cite{Beron-21-POFb} suggest that these conservation laws are linked through Noether's theorem, akin to Casimirs, to the invariance of Eulerian variables under particle relabeling. Also slated for future exploration is the integration of structure-preserving algorithms to provide a more precise framing of the fully nonlinear evolution in the IL$^{(0,1)}$QG model compared to the IL$^0$QG model. This endeavor should be approached in tandem with understanding turbulence dynamics in these models, which deviate from those in HLQG due to the absence of enstrophy conservation, and its implications for observations. One notable challenge is devising a geometric (Lie--Poisson) integrator for flows on the annulus, as current methods are limited to the torus or the sphere \cite[e.g.,][]{Modin-Milo-19, Cifani-etal-22}.

\section*{Glossary of fundamental geophysical fluid concepts}

\begin{description}
    \item[\textbf{Coriolis force.}] The only ``fictitious'' force required to make the fluid motion equations valid in a frame rotating with the Earth when the centrifugal force in that frame is balanced out by the gravitational force on a horizontal plane.
    \item[\textbf{Coriolis parameter.}] Twice the Earth's local vertical angular velocity.
    \item[\textbf{Boussinesq approximation.}] An approximation where density variations are ignored except where they appear multiplied by the acceleration due to gravity, that is, the vector composition of the gravitational and centrifugal accelerations.
    \item[\textbf{Primitive equations.}] Three-dimensional Euler equations for arbitrary stratified incompressible fluid with Coriolis force in the horizontal and the hydrostatic and Boussinesq approximations.  
    \item[\textbf{Rossby number.}] A dimensionless number defined as the ratio of inertial to Coriolis acceleration.
    \item[\textbf{Quasigeostrophic equations.}] Leading-order approximation to the primitive equations in the asymptotic limit of small Rossby number.
    \item[\textbf{Thermal-wind balance.}] A balance between the vertical derivative of the Coriolis force and the horizontal buoyancy gradient.
    \item[\textbf{Baroclinic instability.}] Instability resulting when a current is perturbed off from thermal-wind balance.
    \item[\textbf{Reduced gravity.}] An approximation suitable for describing upper-ocean variability, where an active fluid layer is bounded above by a rigid interface with the atmosphere and floats freely above a motionless, infinitely deep fluid. 
\end{description}

\section*{Acknowledgments}

\add[]{Constructive criticism from an anonymous reviewer resulted in enhancements to the paper.} Incorporating feedback from an anonymous reviewer on a prior draft of this paper contributed to its enhancement. We thank Philip Morrison for calling our attention to vacuum states.  We owe Darryl Holm our renewed interest in thermal fluids, originally introduced to us by the late Pedro Ripa while being graduate students of Departamento de Oceanograf\'ia F\'isica of CICESE (Ensenada, Baja California, Mexico) under his direction.

\appendix

\section*{Appendix: Numerical details of the direct simulations}\label{sec:app}

The numerical simulation of the IL$^{(0,1)}$QG employed a pseudosepectral scheme with fast Fourier transform (FFT) in $x$ and discrete sine transform (DST) in $y$ to invert \eqref{eq:inv}, but as written for the \emph{perturbation} to the (unstable) basic state. We consistently wrote the entire set \eqref{eq:IL01} accordingly. The DST imposes a homogeneous Dirichlet boundary condition on $\delta\bar\psi$ without spoiling the strict zero-normal-flow at the channel northern and southern walls boundary conditions, namely, $\partial_x\bar\psi\vert_{y=0,W} = 0$.  A total of 512 grid points in each direction, zonal and meridional, was considered.  Differentiation was done using dialiased FFT in each direction using the $\frac{3}{2}$ zero-padding rule (we tried Chebyshev differentiation in $y$, but this led to numerical instability).  The equations were forward advanced using a fourth-order Runge--Kutta method with time step $\Delta t U_\sigma/R \approx 0.0001$ as resulting by applying the Courant--Friedrichs--Lewy condition.  Finally, a small amount (roughly $-1.5\times 10^{-11}U_\sigma R^3$) of biharmonic hyperviscosity was included to stabilize the time step.  The same numerical treatment was applied to the simulation of the IL$^0$QG. Our confidence on the simulations is measured by how well the conservation laws in the IL$^{(0,1)}$QG and IL$^0$QG are represented.  Take energy ($\mathcal E$), momentum ($\mathcal M$), and two Casimirs common to the IL$^{(0,1)}$QG and IL$^0$QG, e.g., $\mathcal C_{1,0}$ and $\mathcal C_{1,\cos\psi_\sigma}$; cf.\ \eqref{eq:C-IL01} and \eqref{eq:C-IL0}.  The absolute magnitude of the relative error with respect to the values of these quantities initially are about 19.2, 2.6, 2.8, and 2.9\% for the IL$^{(0,1)}$QG, and 17.3, 12.2, 3.0, and 3.3\% for the IL$^0$QG 

\bibliographystyle{alpha}
\newcommand{\etalchar}[1]{$^{#1}$}

\end{document}